\begin{document}

\title{A \tlaplus Proof System}

\titlerunning{A \tlaplus Proof System}


\author{
  Kaustuv Chaudhuri \\
  INRIA \\
  \and
  Damien Doligez \\
  INRIA \\
  \and
  Leslie Lamport \\
  Microsoft Research \\
  \and
  Stephan Merz \\
  INRIA \& Loria
}

\authorrunning{Chaudhuri, Doligez, Lamport, and Merz}

\maketitle

\begin{abstract}
  We describe an extension to the \tlaplus specification language
  with constructs for writing proofs and a proof environment, called
  the Proof Manager (PM), to checks those proofs.  The language and
  the \PM support the incremental development and checking of
  hierarchically structured proofs.  The \PM translates a proof into
  a set of independent proof obligations and calls upon a collection
  of back-end provers to verify them.  Different provers can be used
  to verify different obligations.  The currently supported back-ends
  are the tableau prover Zenon and Isabelle/\tlaplus, an
  axiomatisation of \tlaplus in Isabelle/Pure.  The proof obligations
  for a complete \tlatwo proof can also be used to certify the theorem
  in Isabelle/\tlaplus.
\end{abstract}

\section{Introduction}
\label{sec:intro}

\tlaplus is a language for specifying the behavior of concurrent and
distributed systems and asserting properties of those
systems~\cite{lamport03tla}.  However, it provides no way to write
proofs of those properties.  We have designed an extended version of
the language that allows writing proofs, and we have begun
implementing a system centered around a \textit{Proof Manager} (\PM)
that invokes existing automated and interactive proof systems to check
those proofs. For now, the new version of \tlaplus is called \tlatwo
to distinguish it from the current one.  We describe here the \tlatwo
proof constructs and the current state of the proof system.

The primary goal of \tlatwo and the proof system is the mechanical
verification of systems specifications. The proof system must not only
support the modal and temporal aspects of TLA needed
to reason about system properties, but must also support
ordinary mathematical reasoning in the underlying logic. Proofs in
\tlatwo are natural deduction proofs written in a hierarchical style
that we have found to be good for ordinary
mathematics~\cite{lamport93amm} and crucial for managing the
complexity of correctness proofs of systems~\cite{gafni:disk-paxos}.

The \PM computes proof obligations that establish the correctness of
the proof and sends them to one or more back-end provers to be
verified.  Currently, the back-end provers are Isabelle/\tlaplus, a
faithful axiomatization of \tlaplus in Isabelle/Pure, and
Zenon~\cite{bonichon07lpar}, a tableau prover for classical
first-order logic with equality.  The \PM first sends a proof
obligation to Zenon.  If Zenon succeeds, it produces an Isar script
that the \PM sends to Isabelle to check.
Otherwise, the \PM outputs an Isar script that uses one of Isabelle's
automated tactics.
In both cases, the obligations are certified by Isabelle/\tlaplus.
The system architecture easily accommodates other back-end provers; if
these are proof-producing, then we can use their proofs to certify the
obligations in Isabelle/\tlaplus, resulting in high confidence in the
overall correctness of the proof.

The \tlatwo proof constructs are described in
Section~\ref{sec:proof-language}.  Section~\ref{sec:obligations}
describes the proof obligations generated by the \PM, and
Section~\ref{sec:backend} describes how the \PM uses Zenon and
Isabelle to verify them.  The conclusion summarizes what we have done
and not yet done and briefly discusses related work.

\section{\tlaplus and its Proof Language}
\label{sec:proof-language}

\subsection{TLA}
\label{sec:proof-language.tla} 

The \tlaplus language is based on the Temporal Logic of Actions
(TLA)~\cite{lamport:newtla}, a linear-time temporal logic. The rigid
variables of TLA are called \emph{constants} and the flexible
variables are called simply \emph{variables}.  TLA assumes an
underlying ordinary (non-modal) logic for constructing expressions.
Operators of that logic are called \emph{constant} operators.  A
\emph{state function} is an expression built from constant operators
and TLA constants and variables.  The elementary (non-temporal)
formulas of TLA are \textit{actions}, which are formulas built with
constant operators, constants, variables, and expressions of the form
$f'$, where $f$ is a state function.  (TLA also has an \ENABLED
operator that is used in expressing fairness, but we ignore it
for brevity.)  An action is interpreted as a predicate on pairs of
states that describes a set of possible state transitions, where state
functions refer to the starting state and primed state functions refer
to the ending state.  Because priming distributes over constant
operators and because $c'$ is equal to $c$ for any constant $c$, an
action can be reduced to a formula built from constant operators,
constants, variables, and primed variables.

TLA is practical for describing systems because all the complexity of
a specification is in the action formulas.  Temporal operators are
essentially used only to assert liveness properties, including
fairness of system actions.  Most of the work in a TLA proof is in
proving action formulas; temporal reasoning occurs only in proving
liveness properties and is limited to propositional temporal logic and
to applying a handful of proof rules whose main premises are action
formulas.  Because temporal reasoning is such a small part of TLA
proofs, we have deferred its implementation.  The \PM now handles
only action formulas.  We have enough experience mechanizing TLA's
temporal reasoning~\cite{engberg:mechanical} to be fairly confident
that it will not be hard to extend the \PM to support it.

A formula built from constant operators, constants, variables, and
primed variables is valid iff it is a valid formula of the underlying
logic when constants, variables, and primed variables are treated as
distinct variables of the logic---that is, if $v$ and $v'$ are
considered to be two distinct variables of the underlying logic, for
any TLA variable $v$.  Since any action formula is reducible to such a
formula, action reasoning is immediately reducible to reasoning in the
underlying logic.  We therefore ignore variables and priming here and
consider only constant formulas.

\subsection{\tlaplus}

The \tlaplus language adds the following to the TLA logic:
\begin{icom}
\item An underlying logic that is essentially ZFC set theory plus
  classical untyped first-order logic with Hilbert's
  $\varepsilon$~\cite{leisenring:mathematical-logic}.  The major
  difference between this underlying logic and traditional ZFC is that
  functions are defined axiomatically rather than being represented as
  sets of ordered pairs.

\item A mechanism for defining operators, where a user-defined
  operator is essentially a macro that is expanded syntactically.
  (\tlaplus permits recursive function definitions, but they are
  translated to ordinary definitions using Hilbert's $\varepsilon$.)

\item Modules, where one module can import definitions
  and theorems from other modules.  A module is parameterized by its
  declared variables and constants, and it may be instantiated in another
  module by substituting expressions for its parameters. The
  combination of substitution and the \ENABLED\ operator introduces
  some complications, but space limitations prevent us from discussing
  them, so we largely ignore modules in this paper.
\end{icom}
\tlaplus has been extensively documented~\cite{lamport03tla}.  Since
we are concerned only with reasoning about its underlying logic, which
is a very familiar one, we do not bother to describe \tlaplus in any
detail.  All of its nonstandard notation that appears in our examples is
explained.

\subsection{The Proof Language}
\label{sec:proof-language.lang}

The major new feature of \tlatwo is its proof language.  (For reasons
having nothing to do with proofs, \tlatwo also introduces recursive
operator definitions, which we ignore here for brevity.)  We describe
the basic proof language, omitting a few constructs
that concern aspects such as module instantiation that we are not
discussing.  \tlatwo also adds constructs for naming subexpressions
of a definition or theorem, which is important in practice for writing
proofs but is orthogonal to the concerns of this paper.

The goal of the language is to make proofs easy to read and write for
someone with no knowledge of how the proofs are being checked.  This
leads to a mostly declarative language, built around the uses and
proofs of assertions rather than around the application of
proof-search tactics.  It is therefore more akin to
Isabelle/Isar~\cite{isar} than to more operational interactive
languages such as Coq's Vernacular~\cite{coq}.
Nevertheless, the proof language does include a few operational
constructs that can eliminate the repetition of common idioms, albeit
with some loss of perspicuity.

At any point in a \tlaplus proof, there is a current obligation that
is to be proved.  The obligation contains a \emph{context} of known
facts, definitions, and declarations, and a \emph{goal}.
The obligation claims that the goal is logically entailed by the
context.  Some of the facts and definitions in the context are marked
(explicitly or implicitly) as \emph{usable} for reasoning, while the
remaining facts and definitions are \textit{hidden}.

Proofs are structured hierarchically. The leaf (lowest-level) proof
\OBVIOUS\ asserts that the current goal follows easily from the usable
facts and definitions.  The leaf proof
\begin{gather*}
  \BY\ e_{1},\ldots, e_{m} \ \DEFS\ o_{1},\ldots, o_{n}
\end{gather*}
asserts that the current goal follows easily from the usable facts and
definitions together with (i)~the facts $e_{i}$ that must themselves
follow easily from the context and (ii)~the known definitions of
$o_{j}$.  Whether a goal follows easily from definitions and facts
depends on who is trying to prove it.  For each leaf proof,
the \PM sends the corresponding \emph{leaf obligation} 
to the back-end provers, so in practice ``follows easily''
means that a back-end prover can prove it.
A non-leaf proof is a sequence of \textit{steps}, each consisting of a
begin-step token and a proof construct.  For some constructs
(including a simple assertion of a proposition) the step takes a
subproof, which may be omitted.  The final step in the sequence simply
asserts the current goal, which is represented by the token \QED.
A begin-step token is either a \emph{level token} of the form \s{n} or
a \emph{label} of the form \s{n}"l", where $n$ is a level number that
is the same for all steps of this non-leaf proof, and "l" is an
arbitrary name.  The hierarchical structure is deduced from the level
numbers of the begin-step tokens, a higher level number beginning a
subproof.

Some steps make declarations or definitions or change the current goal
and do not require a proof.  Other steps make assertions that become
the current goals for their proofs.  An omitted proof (or one
consisting of the token \OMITTED) is considered to be a leaf proof
that instructs the assertion to be accepted as true.  Of course, the
proof is then incomplete.  From a logical point of view, an omitted
step is the same as an additional assumption added to the theorem;
from a practical point of view, it doesn't have to be lifted from its
context and stated at the start.  Omitted steps are intended to be
used only in the intermediate stages of writing a proof.

Following a step that makes an assertion (and the step's proof), until
the end of the current proof (after the \QED\ step), the contexts
contain that assertion in their sets of known facts.  The assertion is
marked usable iff the begin-step token is a level token; otherwise it
can be referred to by its label in a \BY\ proof or made usable with
a \USE\ step.

The hierarchical structure of proofs not only aids in reading the finished
proof but is also quite useful in incrementally writing proofs.  The
steps of a non-leaf proof are first written with all proofs but that
of the \QED\ step omitted.  After checking the proof of the \QED
step, the proofs omitted for other steps in this or earlier levels
are written in any order.  When writing the proof, one may discover
facts that are needed in the proofs of multiple steps.
Such a fact is then added to the proof as an earlier step, or added at a
higher level.  It can also be removed from the proof of the theorem
and proved separately as a lemma.  However, the hierarchical proof
language encourages facts relevant only for a particular proof to be
kept within the proof, making the proof's structure easier to see and
simplifying maintenance of the proof.  For correctness proofs of
systems, the first few levels of the hierarchy are generally
determined by the structure of the formula to be proved---for example,
the proof that a formula implies a conjunction usually consists of steps
asserting that it implies each conjunct.

As an example, we incrementally construct a hierarchical proof of
Cantor's theorem, which states that there is no surjective function
from a set to its powerset. It is written in \tlaplus as:
\begin{quote} \small
  \begin{tabbing}
    \THEOREM\ "\forall S : \forall f \in [S -> \SUBSET\ S] : 
        \exists A \in \SUBSET\ S : \forall x \in S : f[x] \neq A"
  \end{tabbing}
\end{quote}
where function application is written using square brackets,
"\SUBSET\ S" is the powerset of "S", and "[S -> T]" is the set of
functions from $S$ to $T$.

The statement of the theorem is the current goal for its top-level
proof. A goal of the form $\forall v:e$ is proved by introducing a
generic constant and proving the formula obtained by substituting it
for the bound identifier. We express this as follows, using the
\ASSUME/\PROVE construct of \tlatwo:
\begin{quote} \small
  \begin{tabbing}
    \THEOREM\ "\forall S : \forall f \in [S -> \SUBSET\ S] : 
                \exists A \in \SUBSET\ S : \forall x \in S : f[x] \neq A" \\
    \LSP \= \s11.\ \= \ASSUME \= "\NEW\ S", \\
         \>        \>         \> "\NEW\ f \in [S -> \SUBSET\ S]"\\
         \>        \> \PROVE "\exists A \in \SUBSET\ S : \forall x \in S : f[x] \neq A" \\
         \> \s12.  \> \QED \BY \s11
  \end{tabbing}
\end{quote}
Although we could have used labels such as \s1"one" and \s1"last"
instead of \s11 and \s12, we have found that proofs are easier to read
when steps at the same level are labeled with consecutive numbers.
One typically starts using consecutive step numbers and then uses
labels like \s32a for inserting additional steps.  When the proof is
finished, steps are renumbered consecutively.  (A planned user
interface will automate this renumbering.)

Step \s11 asserts that for any constants "S" and "f" with "f \in [S ->
\SUBSET\ S]", the proposition to the right of the \PROVE is true.
More precisely, the current context for the (as yet unwritten) proof
of \s11 contains the declarations of $S$ and $f$ and the usable fact
"f \in [S -> \SUBSET\ S]", and the \PROVE\ assertion is its goal.  The
\QED step states that the original goal (the theorem) follows from the
assertion in step~\s11.

We tell the \PM to check this (incomplete) proof, which it does by
having the back-end provers verify the proof obligation for the \QED
step.  The verification succeeds, and we now continue by writing the
proof of \s11.  (Had the verification failed because \s11 did not
imply the current goal, we would have caught the error before
attempting to prove \s11, which we expect to be harder to do.)

We optimistically start with the proof \OBVIOUS, but it is too hard
for the back-end to prove, and the \PM reports a timeout.  Often this
means that a necessary fact or definition in the context is hidden and
we merely have to make it usable with a \USE step or a \BY proof.  In
this case we have no such hidden assumptions, so we must refine the
goal into simpler goals with a non-leaf proof.  We let this proof have
level 2 (we can use any level greater than 1).  Since the goal itself
is existentially quantified, we must supply a witness.  In this case,
the witness is the classic diagonal set, which we call~"T".
\begin{quote} \small
  \begin{tabbing}
    \THEOREM\ "\forall S : \forall f \in [S -> \SUBSET\ S] : \exists A \in \SUBSET\ S : \forall x \in S : f[x] \neq A" \kill
    \PROOF \kill
    \LSP \= \s11.\ \= \ASSUME \= "\NEW\ S", \\
         \>        \>         \> "\NEW\ f \in [S -> \SUBSET\ S]" \\
         \>        \> \PROVE "\exists A \in \SUBSET\ S : \forall x \in S : f[x] \neq A" \\
         \>   \hspace{1em}     \= \s21.\ \= \DEFINE "T \DEF \{z \in S : z \notin f[z]\}" \\
         \>        \> \s22.  \> "\forall x \in S : f[x] \neq T" \\
         \>        \> \s23.  \> \QED \BY \s22
  \end{tabbing}
\end{quote}
Because definitions made within a proof are usable by default, the
definition of $T$ is usable in the proofs of \s22 and \s23.  Once
again, the proof of the \QED\ step is automatically verified, so all
that remains is to prove \s22.  (The \DEFINE\ step requires no proof.)

The system accepts \OBVIOUS\ as the proof of \s22 because the only
difficulty in the proof of \s11 is finding the witness. However,
suppose we want to add another level of proof for the benefit of a
human reader.  The universal quantification is proved as above, by
introducing a fresh constant:

\begin{quote} \small
  \begin{tabbing}
    \THEOREM\ "\forall S : \forall f \in [S -> \SUBSET\ S] : \exists A \in \SUBSET\ S : \forall x \in S : f[x] \neq A" \kill
    \PROOF \kill
    \LSP \= \s11.\ \= \ASSUME \= "\NEW\ S", \kill
         \>        \>         \> "\NEW\ f \in [S -> \SUBSET\ S]" \kill
         \>        \> \PROVE "\exists A \in \SUBSET\ S : \forall x \in S : f[x] \neq A" \kill
         \>        \> \PROOF \kill
         \>   \hspace{1em} \= \s21.\ \= \DEFINE "T == \{z \in S : z \notin f[z]\}" \kill
         \>        \> \s22.  \> "\forall x \in S : f[x] \neq T" \\
         \>        \> \hspace{1em} \= \s31.\ \= \ASSUME "\NEW\ x \in S" \PROVE "f[x] \neq T" \\
         \>        \>        \> \s32.\ \> \QED \BY \s31
  \end{tabbing}
\end{quote}
Naturally, the \QED step is verified.  Although the system accepts
\OBVIOUS\ as the proof of \s31 (remember that it could verify \s22 by
itself), we can provide more detail with yet another level
of proof.  We write this proof the way it would seem natural to a
person---by breaking it into two cases:
\begin{quote} \small
  \begin{tabbing}
    \THEOREM\ "\forall S : \forall f \in [S -> \SUBSET\ S] : \exists A \in \SUBSET\ S : \forall x \in S : f[x] \neq A" \kill
    \PROOF \kill
    \LSP \= \s11.\ \= \ASSUME \= "\NEW\ S", \kill
         \>        \>         \> "\NEW\ f \in [S -> \SUBSET\ S]" \kill
         \>        \> \PROVE "\exists A \in \SUBSET\ S : \forall x \in S : f[x] \neq A" \kill
         \>        \> \PROOF \kill
         \> \hspace{1em} \= \s21.\ \= \DEFINE "T == \{z \in S : z \notin f[z]\}" \kill
         \>        \> \s22.  \> "\forall x \in S : f[x] \neq T" \kill
         \>        \>  \hspace{1em} \= \s31.\ \= \ASSUME "\NEW\ x \in S" \PROVE "f[x] \neq T" \\
         \>        \>        \> \hspace{1em} \= \s41.\ \= \CASE "x \in T" \\
         \>        \>        \>        \> \s42.\ \> \CASE "x \notin T" \\
         \>        \>        \>        \> \s43.\ \> \QED \BY \s41, \s42
  \end{tabbing}
\end{quote}
The (omitted) proof of the \CASE\ statement \s41 has as its goal
"f[x]\neq T" and has the additional usable fact $x\in T$ in its context.

We continue refining the proof in this way, stopping 
with an \OBVIOUS or \BY proof when a goal is
obvious enough for the back-end prover or for a human reader,
depending on who the proof is being written for. 
A \BY\ statement can guide the prover or the human reader
by listing helpful obvious consequences of known facts.  
For example, the proof of \s41 might be "\BY\ x \notin f[x]".
The proof is now finished: it contains no omitted sub-proofs.  For
reference, the complete text of the proof is given in
Appendix~\ref{apx:cantor}.

Our experience writing hand proofs makes us expect that proofs of
systems could be ten or more levels deep, with the first several
levels dictated by the structure of the property to be proved.  
Our method of numbering steps makes such proofs manageable, and we are
not aware of any good alternative.

This example illustrates how the proof language supports the
hierarchical, non-linear, and incremental development of proofs.  The
proof writer can work on the most problematic unproved steps first,
leaving the easier ones for later.  Finding that a step cannot be
proved (for example, because it is invalid) may require changing other
steps, making proofs of those other steps wasted effort.  We intend to
provide an interface to the \PM that will make it easy for the user
to indicate which proofs should be checked and will avoid
unnecessarily rechecking proofs.

The example also shows how already-proved facts are generally not made
usable, but are invoked explicitly in \BY\ proofs.  Global definitions
are also hidden by default and the user must explicitly make them
usable.  This makes proofs easier to read by telling the reader what
facts and definitions are being used to prove each step.  It also
helps constrain the search space for an automated back-end prover,
leading to more efficient verification.  Facts and definitions can be
switched between usable and hidden by \USE\ and \HIDE\ steps, which
have the same syntax as \BY. As noted above, omitting the label from a
step's starting token (for example, writing \s4 instead of \s42) makes
the fact it asserts usable.  This might be done for compactness at
the lowest levels of a proof.

The example also indicates how the current proof obligation at every
step of the proof is clear, having been written explicitly in a parent
assertion.  This clear structure comes at the cost of introducing many
levels of proof, which can be inconvenient.  One way of avoiding these
extra levels is by using an assertion of the form "\SUFFICES\ A",
which asserts that proving $A$ proves the current goal, and makes $A$
the new current goal in subsequent steps.  In our example proof, one
level in the proof of step \s22 can be eliminated by writing the proof
as:
\begin{quote} \small
  \begin{tabbing}
    \THEOREM\ "\forall S : \forall f \in [S -> \SUBSET\ S] : \exists A \in \SUBSET\ S : \forall x \in S : f[x] \neq A" \kill
    \PROOF \kill
    \LSP \= \s11.\ \= \ASSUME \= "\NEW\ S", \kill
         \>        \>         \> "\NEW\ f \in [S -> \SUBSET\ S]" \kill
         \>        \> \PROVE "\exists A \in \SUBSET\ S : \forall x \in S : f[x] \neq A" \kill
         \>        \> \PROOF \kill
         \> \hspace{1em} \= \s21.\ \= \DEFINE "T == \{z \in S : z \notin f[z]\}" \kill
         \>        \> \s22.  \> "\forall x \in S : f[x] \neq T" \\
         \>        \>  \hspace{1em} \= \s31.\ \= \SUFFICES \ASSUME "\NEW\ x \in S" \PROVE "f[x] \neq T" \\
         \>        \>        \> \hspace{1em} \PROOF \OBVIOUS \\
         \>        \>        \> \s32.\ \> \CASE "x \in T" \\
         \>        \>        \> \s33.\ \> \CASE "x \notin T" \\
         \>        \>        \> \s34.\ \> \QED \BY \s32, \s33
  \end{tabbing}
\end{quote}
where the proofs of the \CASE\ steps are the same as before.  The
\SUFFICES\ statement changes the current goal of the level-3 proof to
$f[x]\neq T$ after adding a declaration of "x" and the usable fact "x
\in S" to the context. This way of proving a universally quantified
formula is sufficiently common that \tlatwo provides a \TAKE\
construct that allows the \SUFFICES\ assertion \s31 and its \OBVIOUS
proof to be written \mbox{\,$\TAKE\ x \in S$\,}. 

There is a similar construct, "\WITNESS\ f \in S" for proving an
existentially quantified goal $\exists x\in S: e$, which changes the
goal to "e[x := f]".
For implicational goals "e => f", the construct $\HAVE\ e$ changes the
goal to $f$.  No other constructs in the \tlatwo proof language change
the form of the current goal. We advise that these constructs be used
only at the lowest levels of the proof, since the new goal they create
must be derived instead of being available textually in a parent
assertion.  (As a check and an aid to the reader, one can at any point
insert a redundant \SUFFICES\ step that simply asserts the current
goal.)

The final \tlatwo proof construct is $\PICK\ x : e$, which introduces
a new symbol $x$ that satisfies $e$.  The goal of the proof of this
\PICK step is $\exists x : e$, and it changes the context of
subsequent steps by adding a declaration of "x" and the fact "e". 
A more formal summary of the language appears in Appendix~\ref{apx}.

The semantics of a \tlatwo proof is independent of any back-end
prover. Different provers will have different notions of what
``follows easily'', so an \OBVIOUS\ proof may be verified by one
prover and not another.  In practice, many provers such as Isabelle
must be directed to use decision procedures or special tactics to
prove some assertions.  For this purpose, special standard modules
will contain dummy theorems for giving directives to the
\PM.  Using such a theorem (with a \USE\ step or \BY\ proof) will
cause the \PM not to use it as a fact, but instead to generate
special directives for back-end provers.  It could even cause the \PM
to use a different back-end prover.  (If possible, the dummy theorem
will assert a true fact that suggests the purpose of the directive.)
\ednote{}{}For instance, using the theorem \emph{Arithmetic}
might be interpreted as an instruction to use a decision procedure for
integers.
We hope that almost all uses of this feature will leave the \tlatwo
proof independent of the back-end provers.  The proof will not have
to be changed if the \PM is reconfigured to replace one decision
procedure with a different one.

\section{Proof Obligations}
\label{sec:obligations}

The \PM generates a separate \textit{proof obligation} for each leaf
proof and orchestrates the back-end provers to verify these
obligations.  Each obligation is independent and can be proved
individually.  If the system cannot verify an obligation within a
reasonable amount of time, the \PM reports a failure.  The user must
then determine if it failed because it depends on hidden facts or
definitions, or if the goal is too complex and needs to be refined
with another level of proof.  (Hiding facts or definitions might also
help to constrain the search space of the back-end provers.)

When the back-end provers fail to find a proof, the user will know
which obligation failed---that is, she will be told the
obligation's usable context and goal and the leaf proof from which it
was generated.  We do not yet know if this will be sufficient in
practice or if the \PM will need to provide the user with more
information about why an obligation failed.  For example, many SAT and
SMT solvers produce counterexamples for an unprovable formula that
can provide useful debugging information.

The \PM will also mediate the \textit{certification} of the \tlatwo
theorem in a formal axiomatization of \tlatwo in a trusted logical
framework, which in the current design is Isabelle/\tlaplus (described
in Section~\ref{sec:backend.isa}). Although the \PM is designed
generically and can support other similar frameworks, for the rest of
this paper we will limit our attention to Isabelle/\tlaplus.
Assuming that Isabelle/\tlaplus is sound, once it has certified a
theorem we know that an error is possible only if the \PM incorrectly
translated the statement of the theorem into Isabelle/\tlaplus.

After certifying the proof obligations generated for the leaf proofs,
called the \textit{leaf obligations}, certification of the theorem
itself is achieved in two steps. First, the \PM generates a
\emph{structure lemma} (and its Isabelle/\tlaplus proof) that states
simply that the collection of leaf obligations implies the theorem.
Then, the \PM generates a proof of the theorem using the
already-certified obligations and structure lemma.  If Isabelle
accepts that proof, we are assured that the translated version of the
theorem is true in Isabelle/\tlaplus, regardless of any errors made by
the \PM.

Of course, we expect the \PM to be correct.  We now explain why it
should be by describing how it generates the leaf obligations from the
proof of a theorem. (Remember that we are considering only \tlatwo
formulas with no temporal operators.)
Formally, a theorem in \tlatwo represents a closed proof obligation in
the \tlatwo meta-logic of the form "\obl{\G ||- e}", where "\G" is a
\emph{context} containing all the declarations, definitions, facts
(previous assumptions or theorems) and the assumptions introduced in
the theorem using an \ASSUME clause (if present), and "e" is a \tlatwo
formula that is the \emph{goal} of the theorem.

A closed obligation "\obl{\G ||- e}" is \textit{true} if $e$ is
entailed by $\G$ in the formal semantics of
\tlaplus~\cite{lamport03tla}. It is said to be \emph{provable} if we
have a proof of $e$ from $\G$ in Isabelle/\tlaplus. Because we assume
Isabelle/\tlaplus to be sound, we consider any provable obligation to
be true.  A \emph{claim} is a sentence of the form "\pi:\obl{\G ||-
  e}", where $\pi$ is a \tlatwo proof.  This claim represents the
verification task that $\pi$ is a proof of the proof obligation
"\obl{\G ||- e}". The \PM generates the leaf obligations of a claim by
recursively traversing its proof, using its structure to refine the
obligation of the claim. For a non-leaf proof, each proof step
modifies the context or the goal of its obligation to produce an
obligation for its following step, and the final \QED step proves the
final form of the obligation. More precisely, every step defines a
\textit{transformation}, written "\sigma.\,\tau: \obl{\G ||- e} -->
\obl{\D ||- f}", which states that the \emph{input} obligation
"\obl{\G ||- e}" is \emph{refined} to the obligation "\obl{\D ||- f}"
by the step "\sigma.\,\tau". A step is said to be \textit{meaningful}
if the input obligation matches the form of the step. (An example of a
meaningless claim is one that involves a \TAKE\ step whose input
obligation does not have a universally quantified goal.) A claim is
meaningful if every step in it is meaningful.

The recursive generation of leaf obligations for meaningful claims and
transformations is specified using inference rules, with the
interpretation that the leaf obligations generated for the claim or
transformation at the conclusion of a rule is the union of those
generated by the claims and transformations in the premises of the
rule.  For example, the following rule is applied to generate the leaf
obligations for a claim "\pi:\obl{\G ||- e}" when $\pi$ is a sequence
of $n$ steps, for $n>1$.
\begin{gather*}
  \I{"\sigma_1.\,\tau_1 \ \ \sigma_2.\,\tau_2 \ \ \dotsb \ \ \sigma_n.\,\tau_n : \obl{\G ||- e}"}
    {"\sigma_1.\,\tau_1 : \obl{\G ||- e} --> \obl{\D ||- f}"
     &
     "\sigma_2.\,\tau_2 \ \ \dotsb \ \ \sigma_n.\,\tau_n : \obl{\D ||- f}"}
\end{gather*}
The leaf obligations of the claim in the conclusion are the union of
those of the claim and transformation in the premises.  As an example
of leaf obligations generated by a transformation, here is a rule for
the step "\sigma.\,\tau" where "\sigma" is the begin-step level token
$\s{n}$ and "\tau" is the proposition "p" with proof $\pi$.
\begin{gather*}
  \I{"\s{n}.\ p\ \PROOF\ \pi : \obl{\G ||- e} --> \obl{\G, p ||- e}"}
    {"\pi : \obl{\G, \hide{\lnot e} ||- p}"}
\end{gather*}
The rule concludes that the refinement in this step is to add "p" to
the context of the obligation, assuming that the sub-proof "\pi" is
able to establish it. The leaf obligations generated by this
transformation are the same as those of the claim in the premise of
the rule.  The goal "e" is negated and added to the context as a
hidden fact (the square brackets indicate hiding).  We can use "\lnot
e" in a \BY proof or \USE statement, and doing so can simplify
subproofs.  (Because we are using classical logic, it is sound to add
$\lnot e$ to the known facts in this way.)
The full set of such rules for every construct in the \tlatwo proof
language is given in appendix~\ref{apx}.

A claim is said to be \emph{complete} it its proof contains no omitted
subproofs. Starting from a complete meaningful claim, the \PM first
generates its leaf obligations and \textit{filters} the hidden
assumptions from their contexts. (Filtration amounts to deleting
hidden facts and replacing hidden operator definitions with
declarations.) The \PM then asks the back-end provers to find proofs
of the filtered obligations, which are used to certify the obligations
in Isabelle/\tlaplus.
The \PM next writes an Isar proof of the obligation of the complete
meaningful claim that uses its certified filtered leaf
obligations. The following meta-theorem (proved in
Appendix~\ref{apx:constraints}) ensures that the \PM can do this for
all complete meaningful claims.
\begin{thm}[Structural Soundness Theorem] \label{thm:meaning}
  If "\pi:\obl{\G ||- e}" is a complete meaningful claim and every
  leaf obligation it generates is provable after filtering hidden
  assumptions , then "\obl{\G ||- e}" is provable.
\end{thm}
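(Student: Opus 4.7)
The plan is to proceed by simultaneous structural induction on proofs and on transformations, strengthening the statement with a companion lemma for transformations: if $\sigma.\tau : \obl{\G ||- e} \to \obl{\D ||- f}$ is a meaningful transformation whose generated leaf obligations are all provable after filtering, and if $\obl{\D ||- f}$ is provable, then so is $\obl{\G ||- e}$. The theorem then falls out by instantiating the claim side of the induction.

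For the base cases, the leaf proofs \OBVIOUS\ and $\BY\ e_1,\dots,e_m\ \DEFS\ o_1,\dots,o_n$ generate leaf obligations that amount to $\obl{\G ||- e}$, possibly augmented with side-obligations for the cited facts. Since filtering only drops hidden facts and replaces hidden operator definitions by declarations, a provable filtered obligation remains provable once the hidden content is restored, simply by weakening; a cut on any side-obligations then yields $\obl{\G ||- e}$ itself.

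For the inductive step on claims, the composition rule displayed in the text decomposes a complete meaningful sequence $\sigma_1.\tau_1 \cdots \sigma_n.\tau_n$ into a head transformation and a strictly shorter tail claim, both of which remain complete and meaningful by inspection of the rule. The inductive hypothesis applied to the tail yields provability of the intermediate obligation $\obl{\D ||- f}$, and the companion lemma applied to the head transformation then delivers provability of $\obl{\G ||- e}$.

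The principal obstacle is therefore the case analysis that establishes the companion lemma for each construct catalogued in Appendix~\ref{apx}. Several cases reduce to standard natural-deduction inferences: \TAKE\ is universal introduction with a fresh eigenvariable, \WITNESS\ is existential introduction, \HAVE\ is implication introduction, and a pair of \CASE\ steps closed by \QED\ combines disjunction elimination with an excluded-middle instance on the case condition. The simple-assertion rule displayed in the text is essentially a cut, together with the observation that hiding $\lnot e$ in the context is classically sound. The genuinely delicate cases are \PICK, which needs Hilbert's $\varepsilon$ to realise the witness that enriches the subsequent context; \ASSUME/\PROVE\ with \NEW\ declarations, where freshness of the introduced eigenvariables must be tracked through the recursive descent; and the interaction of \DEFINE\ with filtering, where the soundness of replacing hidden definitions by fresh declarations has to be justified uniformly. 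In each case the corresponding Isabelle/\tlaplus\ derivation is elementary, so the bulk of the work is structural bookkeeping rather than mathematical depth.
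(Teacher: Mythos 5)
Your proposal takes essentially the same route as the paper: the paper factors the argument into the Correctness Theorem~\ref{thm:correctness}, proved by exactly the mutual induction you describe (its part~(2) being your companion lemma for transformations), the Verification Lemma~\ref{thm:verification} stating that provability of the filtered obligation implies provability of the obligation (which needs the Definition fact~\ref{thm:definition} in addition to Weakening~\ref{thm:weaken}, since restoring a hidden definition from a declaration is not mere weakening), and the Trust Axiom~\ref{axm:trust}. The only organizational difference is that the paper discharges the un-filtering step once, applied to every leaf obligation before the induction is invoked, rather than inside the leaf cases of the induction as you do.
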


\noindent%
Isabelle/\tlaplus then uses this proof to certify the obligation of
the claim. From the assumptions that the Isabelle/\tlaplus
axiomatization is faithful to the semantics of \tlatwo and that the
embedding of \tlatwo into Isabelle/\tlaplus is sound, it follows that
the obligation is true.

\section{Verifying Proof Obligations}
\label{sec:backend}

Once the \PM generates the leaf obligations, it must send them to the
back-end provers.  The one non-obvious part of doing this is deciding
whether definitions should be expanded by the \PM or by the prover.
This is discussed in Section~\ref{sec:backend.pm}.  We then describe
the state of our two current back-end provers, Isabelle/\tlaplus and
Zenon.

\subsection{Expanding Definitions}
\label{sec:backend.pm}

Expansion of usable definitions cannot be left entirely to the
back-end prover.  The \PM itself must do it for two reasons:
\begin{icom}
\item It must check that the current goal has the right form for a
  \TAKE, \WITNESS, or \HAVE\ step to be meaningful, and this can require
  expanding definitions.

\item The encoding of \tlaplus in the back-end prover's logic would be
  unsound if a modal operator like prime~($'$) were encoded as a
  non-modal operator. Hence, encoding a definition like $O(x)\DEF x'$
  as an ordinary definition in the prover's logic would be unsound.
  All instances of such operators must be removed by expanding their
  definitions before a leaf obligation is sent to the back-end prover.
  Such operator definitions seldom occur in actual \tlaplus
  specifications, but the \PM must be able to deal with them.
\end{icom}
Another reason for the \PM to handle definition expansion is that the
Isabelle/\tlaplus object logic does not provide a direct encoding of
definitions made within proofs. We plan to reduce the amount of
trusted code in the \PM by lambda-lifting all usable definitions out
of each leaf obligation and introducing explicit operator definitions
using Isabelle's meta equality ($\equiv$). These definitions will be
expanded before interacting with Isabelle.

\subsection{Isabelle/\tlaplus}
\label{sec:backend.isa}

The core of \tlatwo is being encoded as a new object logic
Isabelle/\tlaplus in the proof assistant
Isabelle~\cite{paulson:isabelle}.  One of Isabelle's distinctive
features that similar proof assistants such as Coq~\cite{coq} or
HOL~\cite{gordon:hol,harrison:hol} lack is genericity with respect to
different logics. The base system Isabelle/Pure provides the trusted
kernel and a framework in which the syntax and proof rules of object
logics can be defined. We have chosen to encode \tlatwo as a separate
object logic rather than add it on top of one of the existing logics
(such as ZF or HOL). This simplifies the translation and makes it
easier to interpret the error messages when Isabelle fails to prove
obligations. A strongly typed logic such as HOL would have been
unsuitable for representing \tlatwo, which is untyped. Isabelle/ZF
might seem like a natural choice, but differences between the way it
and \tlaplus define functions and tuples would have made the
encoding awkward and would have prevented us from reusing existing
theories. Fortunately, the genericity of Isabelle helped us not only
to define the new logic, but also to instantiate the main automated
proof methods, including rewriting, resolution- and tableau provers,
and case-based and inductive reasoning.  Adding support for more
specialized reasoning tools such as proof-producing SAT
solvers~\cite{fontaine:automation} or SMT solvers such as
haRVey~\cite{deharbe:decision} will be similarly helped by existing
generic interfaces.

The current encoding supports only a core subset of \tlatwo, including
propositional and first-order logic, elementary set theory, functions,
and the construction of natural numbers.  Support for arithmetic,
strings, tuples, sequences, and records is now being added; support
for the modal part of \tlatwo{} (variables, priming, and temporal
logic) will be added later.  Nevertheless, the existing fragment can
already be used to test the interaction of the \PM with Isabelle and
other back-end provers.  As explained above, Isabelle/\tlaplus is used
both as a back-end prover and to check proof scripts produced by other
back-end provers such as Zenon.  If it turns out to be necessary, we
will enable the user to invoke one of Isabelle's automated proof
methods (such as \texttt{auto} or \texttt{blast}) by using a dummy
theorem, as explained at the end of
Section~\ref{sec:proof-language.lang}.  If the method succeeds, one
again obtains an Isabelle theorem.  Of course,
Isabelle/\tlaplus can also be used independently of the \PM, which is
helpful when debugging tactics.

\subsection{Zenon}
\label{sec:backend.zenon}

Zenon~\cite{bonichon07lpar} is a tableau prover for classical
first-order logic with equality that was initially designed to output
formal proofs checkable by Coq~\cite{coq}.
Zenon outputs proofs in an automatically-checkable format and it is
easily extensible with new inference rules.  One of its design goals
is predictability in solving simple problems, rather than high
performance in solving some hard problems.  These characteristics make
it well-suited to our needs.

We have extended Zenon to output Isar proof scripts for
Isabelle/\tlaplus theorems, and the \PM uses Zenon as a back-end
prover, shipping the proofs it produces to Isabelle to certify the
obligation.  We have also extended Zenon with direct support for the
\tlatwo logic, including definitions and rules about sets and
functions.  Adding support in the form of rules (instead of axioms) is
necessary because some rules are not expressible as first-order
axioms, notably the rules about the set constructs:
\begin{gather*} \small
  \I[\textit{subsetOf}]{e\in\{x\in S:P\}}{e\in S & P[x:=e]}
  \qquad
  \I[\textit{setOfAll}]{e\in\{d:x\in S\}}{\exists y\in S:e=d[x:=y]}
\end{gather*}
Even for the rules that are expressible as first-order axioms, adding
them as rules makes the proof search procedure much more efficient in
practice.  The most important example is extensionality: when set
extensionality and function extensionality are added as axioms, they
apply to every equality deduced by the system, and pollute the search
space with large numbers of irrelevant formulas.  By adding them as
rules instead, we can use heuristics to apply them only in cases where
they have some chance of being useful.

Adding support for arithmetic, strings, tuples, sequences, and records
will be done in parallel with the corresponding work on
Isabelle/\tlaplus, to ensure that Zenon will produce proof scripts
that Isabelle/\tlaplus will be able to check. Temporal logic will be
added later.
We also plan to interface Zenon with Isabelle, so it can be called by
a special Isabelle tactic the same way other tools are.  This will
simplify the \PM by giving it a uniform interface to the back-end
provers.  It will also allow using Zenon as an Isabelle tactic
independently of \tlaplus.

\section{Conclusions and Future Work}
\label{sec:conclusions}

We have presented a hierarchically structured proof language for
\tlaplus.  It has several important features that help in managing the
complexity of proofs.  The hierarchical structure means that changes
made at any level of a proof are contained inside that level, which
helps construct and maintain proofs.  Leaf proofs can be omitted and
the resulting incomplete proof can be checked.  This allows different parts
of the proof to be written separately, in a non-linear fashion.
The more traditional linear proof style, in which steps that have not
yet been proved can be used only if explicitly added as hypotheses,
encourages proofs that use many separate lemmas.  Such proofs lack the
coherent structure of a single hierarchical proof.

The proof language lets the user freely and repeatedly make facts and
definitions usable or hidden.  Explicitly stating what is being used
to prove each step makes the proof easier for a human to understand.
It also aids a back-end prover by limiting its search for a proof
to ones that use only necessary facts.

There are other declarative proof languages that are similar to
\tlatwo. Isar~\cite{isar} is one such language, but it has significant
differences that encourage a different style of proof development.
For example, 
it provides an \emph{accumulator} facility to avoid explicit
references to proof steps.  This is fine for short proofs, but in
our experience does not work well for long proofs that are typical of
algorithm verification that \tlatwo{} targets.
Moreover, because Isabelle is designed for interactive use, the
effects of the Isar proof commands are not always easily predictable, and
this encourages a linear rather than hierarchical proof development
style.
\nocite{rudnicki:mizar}%
The Focal Proof Language~\cite{focal} is essentially a subset of the
\tlatwo proof language.  Our experience with hierarchical proofs in
Focal provides additional confidence in the attractiveness of our
approach.  We know of no declarative proof language that has as
flexible a method of using and hiding facts and definitions as that of
\tlatwo.

The \PM transforms a proof into a collection of proof obligations to
be verified by a back-end prover.  Its current version handles proofs
of theorems in the non-temporal fragment of \tlaplus that do not
involve module instantiation (importing of modules with substitution).
Even with this limitation, the system can be useful for many
engineering applications.  We are therefore concentrating on making
the \PM and its back-end provers handle this fragment of \tlaplus
effectively before extending them to the complete language.  The major
work that remains to be done on this is to complete the Zenon and
Isabelle inference rules for reasoning about the built-in constant
operators of \tlaplus.  There are also a few non-temporal aspects of
the \tlatwo language that the \PM does not yet handle, such as
subexpression naming.  We also expect to extend the \PM to support
additional back-end provers, including decision procedures for
arithmetic and for propositional temporal logic.

We do not anticipate that any major changes will be needed to the
\tlatwo proof language.  We do expect some minor tuning as we get more
experience using it. For example, we are not sure whether local
definitions should be usable by default.  A graphical user interface
is being planned for the \tlaplus tools, including the \PM. It will
support the non-linear development of proofs that the language and the
proof system allow.

\bibliographystyle{plain}
\bibliography{submission}

\appendix

\section{Details of the \PM}
\label{apx}

We shall now give a somewhat more formal specification of the \PM and
prove the key Structural Soundness Theorem~\ref{thm:meaning}.  We
begin with a quick summary of the abstract syntax of \tlatwo proofs,
ignoring the stylistic aspects of their concrete
representation. (See~\cite{lamport08tla+2} for a more detailed
presentation of the proof language.)

\begin{defn}[\tlatwo Proof Language] \label{defn:proof-language}
  \tlatwo \emph{proofs}, \emph{non-leaf proofs}, \emph{proof steps}
  and \emph{begin-step} tokens have the following syntax, where "n"
  ranges over natural numbers, "l" over labels, "e" over expressions,
  "\Phi" over lists of expressions, "o" over operator definitions,
  "\P" over sets of operator names, "\vec \beta" over lists of binders
  (\ie, constructs of the form "x" and "x \in e" used to build
  quantified expressions), and "\alpha" over expressions or \ASSUME
  \ldots \PROVE forms.
  \begin{quote} \itshape
    \begin{tabbing}
      (Proofs) \hspace{4.5em}  \= "\pi" \LSP \= "::=" \ \= "\OBVIOUS OR \OMITTED OR \BY\ \Phi\ \DEFS\ \P OR \Pi" \\
      (Non-leaf proofs) \> "\Pi" \> "::=" \> "\sigma.\ \QED\ \PROOF\ \pi" \\
                   \>             \> "\ \ \ |"  \> "\sigma.\ \tau\quad \Pi" \\
      (Proof steps) \> "\tau" \> "::=" \> "\USE\ \Phi\ \DEFS\ \P OR \HIDE\ \Phi\ \DEFS\ \P OR \DEFINE\ o" \\
                    \>          \> "\ \ \ |" \> "\HAVE\ e OR \TAKE\ \vec \beta OR \WITNESS\ \Phi" \\
                    \>          \> "\ \ \ |" \> "\alpha\ \PROOF\ \pi OR \SUFFICES\ \alpha\ \PROOF\ \pi 
                                                    OR \PICK\ \vec \beta : e\ \PROOF\ \pi" \\
      (Begin-step tokens) \> "\sigma" \> "::=" \> "\s n OR \s n l"
    \end{tabbing}
  \end{quote}
  A proof that is not a non-leaf proof is called a \emph{leaf
    proof}. The level numbers of a non-leaf proof must all be the
  same, and those in the subproof of a step (that is, the "\pi" in
  "\alpha\ \PROOF\ \pi", \etc.) must be strictly greater than that of
  the step itself.
\end{defn}

\subsection{The Meta-Language}
\label{apx:context}

The \PM uses proofs in the \tlatwo proof language
(Definition~\ref{defn:proof-language}) to manipulate constructs in the
meta-language of \tlatwo. This meta-language naturally has no
representation in \tlatwo itself; we define its syntax formally as
follows.

\begin{defn}[Meta-Language] \label{defn:syntax}
  The \tlatwo meta-language consists of \emph{obligations},
  \emph{assumptions} and \emph{definables} with the following syntax,
  where "e" ranges over \tlatwo expressions, "x" and "o" over \tlatwo
  identifiers, and "\vec x" over lists of \tlatwo identifiers.
  \begin{quote}
    \begin{tabbing}
      (Obligations) \SP \= "\phi" \LSP \= "::=" \ \= "\obl{h_1, \dotsc, h_n ||- e}" \` ("n \ge 0") \\
      (Assumptions) \> "h" \> "::=" \> "\NEW x OR o \DEF \delta OR \phi OR \hide{o \DEF \delta} OR \hide{\phi}" \\
      (Definables) \> "\delta" \> "::=" \> "\phi OR \LAMBDA\ \vec x : e"
    \end{tabbing}
  \end{quote}
  The expression after "||-" in an obligation is called its
  \emph{goal}. An assumption written inside square brackets "\hide{\
  }" is said to be \emph{hidden}; otherwise it is \emph{usable}. For
  any assumption "h", we write "\unhide h" (read: "h" \emph{made
    usable}) to stand for "h" with its brackets removed if it is a
  hidden assumption, and to stand for "h" if it is not hidden.
  A list of assumptions is called a \emph{context}, with the empty
  context written as "\nil"; we let "\G", "\D" and "\W" range over
  contexts, with "\G, \D" standing for the context that is the
  concatenation of "\G" and "\D". The context "\unhide{\G}" is "\G"
  with all its hidden assumptions made usable.
  The obligation "\obl{\nil ||- e}" is written simply as "e". The
  assumptions "\NEW x", "o \DEF \delta" and "\hide{o \DEF \delta}"
  \emph{bind} the identifiers "x" and "o" respectively. We write "x
  \in \G" if "x" is bound in "\G" and "x \notin \G" if "x" is not
  bound in "\G". The context "\G, h" is considered syntactically
  well-formed iff "h" does not bind an identifier already bound in
  "\G".
\end{defn}

\noindent%
An obligation is a statement that its goal follows from the
assumptions in its context. \tlatwo already defines such a statement
using \ASSUME \ldots \PROVE, but the contexts in such statements have
no hidden assumptions or definitions. 
(To simplify the presentation,
we give the semantics of a slightly enhanced proof language
where proof steps are allowed to mention obligations instead of just
\tlatwo \ASSUME \ldots \PROVE statements.) We define an embedding of
obligations into Isabelle/\tlaplus propositions, which we take as the
ultimate primitives of the \tlatwo meta-logic.

\begin{defn} \label{defn:isabelle-embedding}
  The Isabelle/\tlaplus \emph{embedding} "\isa{-}" of obligations,
  contexts and definables is as follows:
  \begin{align*}
    \begin{aligned}
    \isa{"\G ||- e"} &\ =\  "\isa{\unhide{\G}}\, e" \\[1ex]
    \isa{"\LAMBDA\ \vec x : e"} &\ =\ "\lambda \vec x.\ e" 
    \end{aligned}
    \qquad \qquad
    \begin{aligned}
    \isa{\,\nil\,} &\ =\  \\
    \isa{"\G, \NEW x"} &\ =\  \isa{\G}\,\And x. \\
    \isa{"\G, o \DEF \delta"} &\ =\  \isa{\G}\,\And o.\,\bigl(o \equiv \isa{\delta}\bigr)\ "==>" \\
    \isa{"\G, \phi"} &\ =\  \isa{\G}\,\bigl(\isa{\phi}\bigr)\ "==>" \\[1ex]
    \end{aligned}
  \end{align*}
\end{defn}

\noindent%
For example, "\isa{\NEW P, \hide{\obl{\NEW x ||- P(x)}} ||- \forall x
  : P(x)} = \And P.\ \left(\And x.\ P(x)\right) ==> \forall x : P(x)".
Note that usable and hidden assumptions are treated identically for
the provability of an obligation.

The embedding of ordinary \tlatwo expressions is the identity because
Isabelle/\tlaplus contains \tlatwo expressions as part of its object
syntax. Thus, we do not have to trust the embedding of ordinary
\tlatwo expressions, just that of the obligation language.  In
practice, some aspects of \tlatwo expressions, such as the
indentation-sensitive conjunction and disjunction lists, are sent by
the \PM to Isabelle using an indentation-insensitive
encoding.%
While Isabelle/\tlaplus can implicitly generalize over the free
identifiers in a lemma, we shall be explicit about binding and
consider obligations provable only if they are closed.

\begin{defn}[Well-Formed Obligations] \label{defn:wfo}
  The obligation "\obl{\G ||- e}" is said to be \emph{well-formed} iff
  it is closed and "\isa{\G ||- e}" is a well-typed proposition of
  Isabelle/\tlaplus.
\end{defn}

\begin{defn}[Provability] \label{defn:provable}
  The obligation "\obl{\G ||- e}" is said to be \emph{provable} iff it
  is well-formed and "\isa{\G ||- e}" is certified by the Isabelle
  kernel to follow from the axioms of the Isabelle/\tlaplus object
  logic.
\end{defn}

\noindent%
We trust Isabelle/\tlaplus to be sound with respect to the semantics
of \tlatwo, and therefore provability to imply truth. Formally, we
work under the following \textit{trust} axiom.

\begin{axm}[Trust] \label{axm:trust} \mbox{}
  If "\phi" is provable, then it is true.
\end{axm}

\noindent%
We state a number of useful facts about
obligations (which are all theorems in Isabelle/\tlaplus),
omitting their trivial proofs. The last one
(Fact~\ref{thm:classic}) is true because \tlaplus is based on
classical logic.

\begin{fac}[Definition] \label{thm:definition}
  If "\obl{\G, \NEW o, \D ||- e}" is provable, then "\obl{\G, o \DEF
    \delta, \D ||- e}" is provable if it is well-formed.
\end{fac}

\begin{fac}[Weakening] \label{thm:weaken}
  If "\obl{\G, \D ||- e}" is provable, then "\obl{\G, h, \D ||- e}" is
  provable if it is well-formed.
\end{fac}

\begin{fac}[Expansion] \label{thm:expand}
  If "\obl{\G, o \DEF \delta, \D ||- e}" is provable, then "\obl{\G, o
    \DEF \delta, \D[o := \delta] ||- e[o := \delta]}" is provable.
\end{fac}

\begin{fac}[Strengthening] \label{thm:delete}
  If "\obl{\G, \NEW o, \D ||- e}" or "\obl{\G, o \DEF \delta, \D ||-
    e}" is provable and "o" is not free in "\obl{\D ||- e}", then
  "\obl{\G, \D ||- e}" is provable.
\end{fac}

\begin{fac}[Cut] \label{thm:cut}
  If "\obl{\G, \D ||- e}" is provable and "\obl{\G, \obl{\D ||- e}, \W ||- f}" is provable,
  then "\obl{\G, \W ||- f}" is provable.
\end{fac}

\begin{fac} \label{thm:classic}
  If "\obl{\G, \lnot e, \D ||- e}" is provable, then "\obl{\G, \D ||- e}"
  is provable.
\end{fac}

\noindent%
The \USE/\HIDE \DEFS steps change the visibility of definitions in a
context (Definition~\ref{defn:use/hide} below). Note that changing the
visibility of a definition does not affect the provability of an obligation
because the Isabelle embedding (Definition~\ref{defn:isabelle-embedding})
makes all hidden definitions usable.

\pagebreak[2]

\begin{defn} \label{defn:use/hide}
  If "\G" is a context and "\P" a set of operator names, then:
  \begin{ecom}
  \item \emph{"\G" with "\P" made usable}, written "\G \USING \P", is
    constructed from "\G" by replacing all assumptions of the form
    "\hide{o \DEF \delta}" in "\G" with "o \DEF \delta" for every "o
    \in \P".
  \item \emph{"\G" with "\P" made hidden}, written "\G \HIDING \P", is
    constructed from "\G" by replacing all assumptions of the form "o
    \DEF \delta" in "\G" with "\hide{o \DEF \delta}" for every "o \in
    \P".
  \end{ecom}
\end{defn}

\def\refl#1{\bigl\|\,#1\,\bigr\|}

\noindent%
A sequence of binders "\vec \beta" in the \tlatwo expressions "\forall
\vec \beta : e" or "\exists \vec \beta : e" can be reflected as
assumptions.

\begin{defn}[Binding Reflection] \label{defn:binding-reflection}
  If "\vec \beta" is a list of binders with each element of the form
  "x" or "x \in e", then the \emph{reflection} of "\vec \beta" as
  assumptions, written "\refl{\vec \beta}", is given inductively as
  follows.
  \begin{align*}
    \refl{\,\nil\,} &= \;\nil &
    \refl{\vec \beta, x} &= \refl{\vec \beta}, \NEW x &
    \refl{\vec \beta, x \in e} &= \refl{\vec \beta}, \NEW x, x \in e
  \end{align*}
\end{defn}

\subsection{Interpreting Proofs}
\label{apx:proof-transformation}

Let us recall some definitions from section~\ref{sec:obligations}.

\begin{defn}[Claims and Transformations] \label{defn:check/trans-def}
  A \emph{claim} is a judgement of the form "\pi : \obl{\G ||- e}"
  where "\pi" is a \tlatwo proof. A \emph{transformation} is a
  judgement of the form "\sigma.\,\tau : \obl{\G ||- e} --> \obl{\D
    ||- f}" where "\sigma" is a begin-step token and "\tau" a proof
  step. A claim (respectively, transformation) is said to be
  \emph{complete} if its proof (respectively, proof step) does not
  contain any occurrence of the leaf proof \OMITTED.
\end{defn}

\noindent%
The \PM generates leaf obligations for a claim using two mutually
recursive procedures, \textit{checking} and \textit{transformation},
specified below using the formalism of a \textit{primitive
  derivation}.

\begin{defn} \label{defn:primitive}
  A \emph{primitive derivation} is a derivation constructed using
  inferences of the form
  \begin{gather*} \small
    \Ic{E}{\DD_1 & \dotsb & \DD_n} \tag*{"(n \ge 0)"}
  \end{gather*}
  where "E" is either a claim or a transformation, and "\DD_1, ...,
  \DD_n" are primitive derivations or obligations. An obligation at
  the leaf of a primitive derivation is called a \emph{leaf
    obligation}.
\end{defn}

\begin{defn}[Checking and Transformation] \label{defn:check/trans-proc}
  The primitive derivations of a claim or transformation are
  constructed using the following \emph{checking} and
  \emph{transformation} rules.
  \begin{ecom} \small
  \item \emph{Checking} rules
    \begin{gather*}
      \I[\OBVIOUS]{"\OBVIOUS : \obl{\G ||- e}"}{"\obl{\G ||- e}"}
      \LSP
      \I[\OMITTED]{"\OMITTED : \obl{\G ||- e}"}{}
      \\[1ex]
      \I[\BY]{"\BY\ \Phi\ \DEFS\ \P : \obl{\G ||- e}"}
        {"\s{0}.\ \USE\ \Phi\ \DEFS\ \P" : "\obl{\G ||- e} --> \obl{\D ||- f}"
         &
         "\obl{\D ||- f}"}
      \\[1ex]
      \I[\QED]{"\sigma.\ \QED\ \PROOF\ \pi : \obl{\G ||- e}"}
        {"\pi : \obl{\G ||- e}"}
      \SP
      \I[non-\QED]{"\sigma.\,\tau\ \ \Pi : \obl{\G ||- e}"}
        {"\sigma.\,\tau : \obl{\G ||- e} --> \obl{\D ||- f}"
         &
         "\Pi : \obl{\D ||- f}"}
    \end{gather*}  
  \item \emph{Transformation}
    \begin{gather*} \small
      \I[\USE\ \DEFS]{"\sigma.\ \USE\ \Phi\ \DEFS\ \P : \obl{\G ||- e} --> \obl{\D ||- f}"}
        {"\sigma.\ \USE\ \Phi : \obl{\G \USING \P ||- e} --> \obl{\D ||- f}"}
      \\[1ex]
      \I[\HIDE\ \DEFS]{"\sigma.\ \HIDE\ \Phi\ \DEFS\ \P : \obl{\G ||- e} --> \obl{\D \HIDING \P ||- f}"}
         {"\sigma.\ \HIDE\ \Phi : \obl{\G ||- e} --> \obl{\D ||- f}"}
      \\
      \I[\DEFINE ("o \notin \G")]{"\sigma.\ \DEFINE\ o \DEF \delta : \obl{\G ||- e} --> \obl{\G, \hide{o \DEF \delta} ||- e}"}
      \\
      \I["\USE_0"]{"\sigma.\ \USE\ \nil : \obl{\G ||- e} --> \obl{\G ||- e}"}
      \SP
      \I["\HIDE_0"]{"\sigma.\ \HIDE\ \nil : \obl{\G ||- e} --> \obl{\G ||- e}"}
      \\[1ex]
      \I["\USE_1"]{"\sigma.\ \USE\ \Phi, \obl{\G_0 ||- e_0} : \obl{\G ||- e} --> \obl{\D, \obl{\G_0 ||- e_0} ||- f}"}
        {"\sigma.\ \USE\ \Phi : \obl{\G ||- e} --> \obl{\D ||- f}"
         &
         "\obl{\unhide{\D}, \G_0 ||- e_0}"
        }
      \\[1ex]
      \I["\HIDE_1"]{"\sigma.\ \HIDE\ \Phi, \phi : \obl{\G_0, \phi, \G_1 ||- e} --> \obl{\D ||- f}"}
        {"\sigma.\ \HIDE\ \Phi : \obl{\G_0, hide{\phi}, \G_1 ||- e} --> \obl{\D ||- f}"}
      \\
      \I["\TAKE_0"]{"\sigma.\ \TAKE\ \nil : \obl{\G ||- e} --> \obl{\G ||- e}"}
      \LSP
      \I["\WITNESS_0"]{"\sigma.\ \WITNESS\ \nil : \obl{\G ||- e} --> \obl{\G ||- e}"}
      \\[1ex]
      \I["\TAKE_1"]{"\sigma.\ \TAKE\ u, \vec\beta : \obl{\G ||- \forall x : e} --> \obl{\D ||- f}"}
        {"\sigma.\ \TAKE\ \vec\beta : \obl{\G, \NEW u ||- e [x := u]} --> \obl{\D ||- f}"}
      \\[1ex]
      \I["\TAKE_2"]{"\sigma.\ \TAKE\ u \in T, \vec\beta : \obl{\G ||- \forall x \in S : e} --> \obl{\D ||- f}"}
        {"\obl{\G ||- S \subseteq T}"
         &
         "\sigma.\ \TAKE\ \vec\beta : \obl{\G, \NEW u, u \in T ||- e [x := u]} --> \obl{\D ||- f}"}
      \\[1ex]
      \I["\WITNESS_1"]{"\sigma.\ \WITNESS\ w, \W : \obl{\G ||- \exists x : e} --> \obl{\D ||- f}"}
        {"\sigma.\ \WITNESS\ \W : \obl{\G ||- e [x := w]} --> \obl{\D ||- f}"}
      \\[1ex]
      \I["\WITNESS_2"]{"\sigma.\ \WITNESS\ w \in T, \W : \obl{\G ||- \exists x \in S : e} --> \obl{\D ||- f}"}
        {"\obl{\G ||- T \subseteq S}"
         &
         "\obl{\G ||- w \in T}"
         &
         "\sigma.\ \WITNESS\ \W : \obl{\G, w \in T ||- e [x := w]} --> \obl{\D ||- f}"}
      \\[1ex]
      \I[\HAVE]{"\sigma.\ \HAVE\ g : \obl{\G ||- e => f} --> \obl{\G, g ||- f}"}
        {"\obl{\G, e ||- g}"}
      \\[1ex]
      \I[$\ASSERT_1$]{"\s n.\ \obl{\D ||- f}\ \PROOF\ \pi : \obl{\G ||- e} --> \obl{\G, \obl{\D ||- f} ||- e}"}
        {"\pi : \obl{\G, \hide{\lnot e}, \D ||- f}"}
      \\[1ex]
      \I[$\ASSERT_2$]{"\s n l.\ \obl{\D ||- f}\ \PROOF\ \pi : \obl{\G ||- e} --> \obl{\G, \s n l \DEF \obl{\D ||- f}, \hide{\s n l} ||- e}"}
        {"\pi : \obl{\G, \s n l \DEF \obl{\D ||- f}, \hide{\lnot e}, \D ||- f}"}
      \\[1ex]
      \I[\CASE]{"\sigma.\ \CASE\ g\ \PROOF\ \pi : \obl{\G ||- e}
                       --> \obl{\D ||- f}"}
        {"\sigma.\ \obl{g ||- e}\ \PROOF\ \pi : \obl{\G ||- e} --> \obl{\D ||- f}"}
      \\[1ex]
      \I[$\SUFFICES_1$]{"\s n.\ \SUFFICES\ \obl{\D ||- f}\ \PROOF\ \pi : \obl{\G ||- e} --> \obl{\G, \hide{\lnot e}, \D ||- f}"}
        {"\pi : \obl{\G, \obl{\D ||- f} ||- e}"}
      \\[1ex]
      \I[$\SUFFICES_2$]{"\s n l.\ \SUFFICES\ \obl{\D ||- f}\ \PROOF\ \pi : \obl{\G ||- e} --> \obl{\G, \s n l \DEF \obl{\D ||- f}, \hide{\lnot e}, \D ||- f}"}
        {"\pi : \obl{\G, \s n l \DEF \obl{\D ||- f}, \hide{\s n l} ||- e}"}
      \\[1ex]
      \I[\PICK]{"\sigma.\ \PICK\ \vec\beta : p\ \PROOF\ \pi :
                  \obl{\G ||- e} --> \obl{\G, \refl{\vec \beta}, p ||- e}"}
        {"\pi : \obl{\G ||- \exists \vec \beta : p}"}
    \end{gather*}
  \end{ecom}
\end{defn}

\noindent%
The inference rules in the above definition are deterministic: the
conclusion of each rule uniquely determines the premises. However, the
rules are partial; for example, there is no rule that concludes a
transformation of the form "\sigma.\,\TAKE\ x \in S : \obl{\G ||- B
  \land C} --> \obl{\D ||- f}".

\begin{defn} \label{defn:meaningful}
  A claim or a transformation is said to be \emph{meaningful} if it
  has a primitive derivation.
\end{defn}

\begin{defn}[Generating Leaf Obligations]
  A meaningful claim or transformation is said to \emph{generate} the
  leaf obligations of its primitive derivation.
\end{defn}

\noindent%
In the rest of this appendix we limit our attention to complete
meaningful claims and transformations.

\subsection{Correctness}
\label{apx:correctness}

If the leaf obligations generated by a complete meaningful claim are
provable, then the obligation in the claim itself ought to be
provable. In this section we prove this theorem by analysis of the
checking and transformation rules.

\begin{defn}[Provability of Claims and Transformation]
  \label{defn:proc-provable} \mbox{}
  \begin{ecom}
  \item The claim "\pi : \obl{\G ||- e}" is \emph{provable} iff it is
    complete and meaningful and the leaf obligations it generates are
    all provable.
  \item The transformation "\sigma.\,\tau : \obl{\G ||- e} --> \obl{\D
      ||- f}" is \emph{provable} iff it is complete and meaningful and
    the leaf obligations it generates are all provable.
  \end{ecom}
\end{defn}

\begin{thm}[Correctness] \label{thm:correctness} \mbox{}
  \begin{ecom}[\LSP (1)]
  \item If "\pi : \obl{\G ||- e}" is provable, then "\obl{\G ||- e}"
    is provable.
  \item If "\sigma.\,\tau : \obl{\G ||- e} --> \obl{\D ||- f}" is
    provable and "\obl{\D ||- f}" is provable, then "\obl{\G ||-
      e}" is provable.
  \end{ecom}
\end{thm}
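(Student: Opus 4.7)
The plan is to prove both parts simultaneously by mutual structural induction on the primitive derivations of the claim and the transformation, doing a case analysis on the last rule of Definition~\ref{defn:check/trans-proc}. For each rule, applying the inductive hypotheses to the sub-derivations and combining them with Facts~\ref{thm:definition}--\ref{thm:classic} and the embedding of Definition~\ref{defn:isabelle-embedding} should yield provability of the rule's conclusion.

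The direct cases require almost no effort. For OBVIOUS, the leaf obligation is $\obl{\G ||- e}$ itself; for QED the claim unfolds to part~(1) of the IH on the subproof; and for the non-QED rule we chain part~(2) on the transformation with part~(1) on the residual claim. BY reduces to a USE-transformation followed by the single leaf obligation supplied by the rule. The transformation rules \USE/\HIDE (with and without \DEFS), their zero-length variants, \DEFINE, and $\HIDE_1$ all leave the embedding unchanged modulo hidden-versus-usable status and the introduction of a hidden definition; provability therefore transfers directly, using Fact~\ref{thm:definition} where a definition is introduced, and the observation that the embedding in Definition~\ref{defn:isabelle-embedding} already makes hidden assumptions visible. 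The $\USE_1$ rule additionally uses the cited $\obl{\unhide{\D}, \G_0 ||- e_0}$ as a leaf obligation and combines it with Weakening (Fact~\ref{thm:weaken}) to justify the added assumption.

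The quantifier and goal-reshaping rules are routine. $\TAKE_1$ and $\TAKE_2$ reconstruct a universally quantified goal from the IH by $\forall$-introduction in Isabelle/\tlaplus, the bounded variant additionally discharging the side leaf obligation $\obl{\G ||- S \subseteq T}$; $\WITNESS_1$ and $\WITNESS_2$ are dual, instantiating an existential with the witness term and discharging $\obl{\G ||- T \subseteq S}$ and $\obl{\G ||- w \in T}$ when present. $\HAVE$ follows by weakening the leaf $\obl{\G, e ||- g}$ with $g$ and cutting against the output $\obl{\G, g ||- f}$, then doing one implication-introduction. $\CASE$ simply re-unfolds into the hypothesis-assertion form, and $\PICK$ combines the side claim $\obl{\G ||- \exists \vec\beta : p}$ with existential elimination.

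The principal obstacle is the $\ASSERT$ and $\SUFFICES$ cluster, which exploits classicality together with the hidden $\lnot e$ assumption inserted by the rule. For $\ASSERT_1$, the IH on the premise claim gives $\obl{\G, \hide{\lnot e}, \D ||- f}$, which by Definition~\ref{defn:isabelle-embedding} coincides with $\obl{\G, \lnot e, \D ||- f}$; currying under the embedding this is the provability of $\obl{\G, \lnot e ||- \obl{\D ||- f}}$. Weakening the assumed output $\obl{\G, \obl{\D ||- f} ||- e}$ with $\lnot e$ and applying Cut (Fact~\ref{thm:cut}) yields $\obl{\G, \lnot e ||- e}$, and Fact~\ref{thm:classic} then delivers $\obl{\G ||- e}$. $\SUFFICES_1$ follows the same recipe with the roles of the premise and output obligations swapped, and the labelled variants $\ASSERT_2$ and $\SUFFICES_2$ are identical up to the introduction and later Strengthening-based removal (Fact~\ref{thm:delete}) of the $\s n l$ binding, which is not free in $\obl{\G ||- e}$. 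No individual case is hard in isolation; the main care lies in the bookkeeping of hidden assumptions and in interleaving the two inductive hypotheses along non-QED sequences.
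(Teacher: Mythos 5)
Your strategy coincides with the paper's: a lexicographic mutual induction on the primitive derivations of the claim and the transformation, with one case per rule of Definition~\ref{defn:check/trans-proc}, each case discharged from the inductive hypotheses, Facts~\ref{thm:definition}--\ref{thm:classic}, and the observation that the embedding of Definition~\ref{defn:isabelle-embedding} ignores hiding; the checking cases, the quantifier and goal-reshaping cases, and the classical argument for the assertion and \SUFFICES{} cases (weaken by the hidden $\lnot e$, cut against the subproof's obligation, conclude by Fact~\ref{thm:classic}) are exactly as in Appendix~\ref{apx:correctness}. Three citations need repair, though none affects the architecture. For \DEFINE{} you invoke Fact~\ref{thm:definition}, which goes the wrong way (it replaces a declaration by a definition inside an already provable obligation); what is needed is to delete the freshly introduced hidden definition from the provable output obligation, i.e.\ Strengthening (Fact~\ref{thm:delete}), justified because $o$ is not free in $e$ by the side condition $o \notin \Gamma$ together with closedness of the input obligation. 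For the \USE{} rule that cites a fact, Weakening cannot remove the cited fact from the output context; the paper identifies the leaf obligation with its unhidden form via the embedding and then applies Cut (Fact~\ref{thm:cut}) to eliminate it before invoking the inductive hypothesis. Finally, in the labelled variants $\ASSERT_2$ and $\SUFFICES_2$ the Strengthening step you mention must be preceded by Expansion (Fact~\ref{thm:expand}) to unfold the hidden step-name assumption into the obligation it abbreviates; otherwise the cut formula does not match.
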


\begin{proof} \small
  Let "\DD" be the primitive derivation for the claim in (1) and let "\EE"
  be the primitive derivation for the transformation in (2). The proof
  will be by lexicographic induction on the structures of "\DD" and
  "\EE", with a provable transformation allowed to justify a
  provable claim.

  \begin{ecom}[{$\s1$}1.]
  \item If "\pi : \obl{\G ||- e}" is provable, then "\obl{\G ||-
      e}" is provable.
    \begin{ecom}[{$\s2$}1.] \setlength{\itemsep}{6pt}
    \item \Case "\pi" is "\OBVIOUS", \ie,
      $\DD = \Im[\OBVIOUS.]{"\OBVIOUS : \obl{\G ||- e}"}{"\obl{\G ||- e}"}$
      \Trivial

    \item \Case "\pi" is "\OMITTED" is impossible because "\pi :
      \obl{\G ||- e}" is complete.

    \item \Case "\pi" is "\BY\ \Phi\ \DEFS\ \P", \ie,
      \begin{gather*}
        \DD = 
        \Im[\BY.]{"\BY\ \Phi\ \DEFS\ \P : \obl{\G ||- e}"}
           {\deduce{"\s{0}.\ \USE\ \Phi\ \DEFS\ \P" : "\obl{\G ||- e} --> \obl{\D ||- f}"}{\EE_0}
            &
            "\obl{\D ||- f}"}
      \end{gather*}
      \begin{ecom}[{$\s3$}1.]
      \item "\obl{\D ||- f}" is provable
        \by Definition~\ref{defn:proc-provable}.
      \item \Qed
        \by \s31, i.h. (inductive hypothesis) for "\EE_0".
      \end{ecom}

    \item \Case "\pi" is "\sigma.\, \QED\ \PROOF\ \pi_0", \ie, 
      $
      \DD =
      \Im[\QED.]{"\sigma.\ \QED\ \PROOF\ \pi_0 : \obl{\G ||- e}"}
         {\deduce{"\pi_0 : \obl{\G ||- e}"}{\DD_0}}$
      \by i.h. for "\DD_0".

    \item \Case "\pi" is "\sigma.\,\tau\ \ \Pi", \ie,
      \begin{gather*}
        \DD =
        \Im[non-\QED.]{"\sigma.\,\tau\ \ \Pi : \obl{\G ||- e}"}
          {\deduce{"\sigma.\,\tau : \obl{\G ||- e} --> \obl{\D ||- f}"}{\EE_0}
           &
           \deduce{"\Pi : \obl{\D ||- f}"}{\DD_0}
          }
      \end{gather*}

      \begin{ecom}[{$\s3$}1.]
      \item [\s31.] "\obl{\D ||- f}" is provable
        \by i.h. for "\DD_0".
      \item [\s33.] \Qed
        \by \s31, i.h. for "\EE_0".
      \end{ecom}
    \item \Qed \by \s21, \ldots, \s25.
    \end{ecom}

  \item If "\sigma.\,\tau : \obl{\G ||- e} --> \obl{\D ||- f}" is
    provable and "\obl{\D ||- f}" is provable, then "\obl{\G ||-
      e}" is provable.
    \begin{ecom}[{$\s2$}1.] \setlength{\itemsep}{6pt}

    \item \Case "\tau" is "\USE\ \Phi\ \DEFS\ \P", \ie,
      \begin{gather*}
        \EE =
        \Im[\USE\ \DEFS.]{"\sigma.\ \USE\ \Phi\ \DEFS\ \P : \obl{\G ||- e} --> \obl{\D ||- f}"}
          {\deduce{"\sigma.\ \USE\ \Phi : \obl{\G \USING \P ||- e} --> \obl{\D ||- f}"}{\EE_0}}
      \end{gather*}

      \begin{ecom}[{$\s3$}1.]
      \item "\obl{\G \USING \P ||- e}" is provable
        \by i.h. for "\EE_0".
      \item \Qed
        \by \s31, Definition~\ref{defn:use/hide}.
      \end{ecom}

    \item \Case "\tau" is "\HIDE\ \Phi\ \DEFS\ \P", \ie,
      \begin{gather*}
        \EE =
        \Im[\HIDE\ \DEFS.]{"\sigma.\ \HIDE\ \Phi\ \DEFS\ \P : \obl{\G ||- e} --> \obl{\D \HIDING \P ||- f}"}
           {\deduce{"\sigma.\ \HIDE\ \Phi : \obl{\G ||- e} --> \obl{\D ||- f}"}{\EE_0}}
      \end{gather*}

      \begin{ecom}[{$\s3$}1.]
      \item "\obl{\D ||- f}" is provable
        \by provability of "\obl{\D \HIDING \P ||- f}" and Definition~\ref{defn:use/hide}.
      \item \Qed
        \by \s31, i.h. for "\EE_0".
      \end{ecom}

    \item \Case "\tau" is "\DEFINE\ o \DEF \delta" with "o \notin \G", \ie, 
      \begin{gather*}
        \EE =
        \Im[\DEFINE.]{"\sigma.\ \DEFINE\ o \DEF \delta : \obl{\G ||- e} --> \obl{\G, \hide{o \DEF \delta} ||- e}"}{}
      \end{gather*}

      \begin{ecom}[{$\s3$}1.]
      \item "o" is not free in "e"
        \by "o \notin \G" and closedness of "\obl{\G ||- e}".
      \item \Qed
        \by \s31, strengthening (Fact~\ref{thm:delete}).
      \end{ecom}

    \item \Case "\tau" is "\USE\ \nil", \ie,
      $
      \EE = 
      \Im["\USE_0".]{"\sigma.\ \USE\ \nil : \obl{\G ||- e} --> \obl{\G ||- e}"}{}
      $
      \Trivial

    \item \Case "\tau" is "\HIDE\ \nil", \ie,
      $
      \EE = 
      \Im["\HIDE_0".]{"\sigma.\ \HIDE\ \nil : \obl{\G ||- e} --> \obl{\G ||- e}"}{}
      $
      \Trivial

    \item \Case "\tau" is "\USE\ \Phi, \phi", \ie,
      \begin{gather*}
        \EE = 
        \Im["\USE_1"]{"\sigma.\ \USE\ \Phi, \obl{\G_0 ||- e_0} : \obl{\G ||- e} --> \obl{\D_0, \obl{\G_0 ||- e_0} ||- f}"}
          {\deduce{"\sigma.\ \USE\ \Phi : \obl{\G ||- e} --> \obl{\D_0 ||- f}"}{\EE_0}
           &
           "\obl{\unhide{\D_0}, \G_0 ||- e_0}"
          }      
      \end{gather*}

      \begin{ecom}[{$\s3$}1.]
      \item "\obl{\unhide{\D_0}, \G_0 ||- e_0}" is provable
        \by Definition~\ref{defn:proc-provable}.
      \item "\obl{\D_0, \G_0 ||- e_0}" is provable
        \by \s31, Definition~\ref{defn:isabelle-embedding}.
      \item "\obl{\D_0 ||- f}" is provable
        \by provability of "\obl{\D_0, \obl{\G_0 ||- e_0} ||- f}", \s32, cut (Fact~\ref{thm:cut}).
      \item \Qed
        \by \s33, i.h. for "\EE_0"
      \end{ecom}

    \item \Case "\tau" is "\HIDE\ \Phi, \phi", \ie,
      \begin{gather*}
        \EE =
        \Im["\HIDE_1".]{"\sigma.\ \HIDE\ \Phi, \phi : \obl{\G_0, \phi, \G_1 ||- e} --> \obl{\D ||- f}"}
           {\deduce{"\sigma.\ \HIDE\ \Phi : \obl{\G_0, hide{\phi}, \G_1 ||- e} --> \obl{\D ||- f}"}{\EE_0}}
      \end{gather*}

      \begin{ecom}[{$\s3$}1.]
      \item "\obl{\G_0, \hide{\phi}, \G_1 ||- e}" is provable
        \by provability of "\obl{\D ||- f}", i.h. for "\EE_0".
      \item \Qed
        \by \s31, "\isa{\G_0, \hide{\phi}, \G_1 ||- e} = \isa{\G_0, \phi, \G_1 ||- e}" (Definition~\ref{defn:isabelle-embedding}).
      \end{ecom}

    \item \Case "\tau" is "\TAKE\ \nil", \ie,
      $
      \EE = 
      \Im["\TAKE_0".]{"\sigma.\ \TAKE\ \nil : \obl{\G ||- e} --> \obl{\G ||- e}"}{}
      $
      \Trivial

    \item \Case "\tau" is "\WITNESS\ \nil", \ie,
      $
      \EE = 
      \Im["\WITNESS_0".]{"\sigma.\ \WITNESS\ \nil : \obl{\G ||- e} --> \obl{\G ||- e}"}{}
      $
      \Trivial

    \item \Case "\tau" is "\TAKE\ u, \vec \beta", \ie,
      \begin{gather*}
        \EE =
        \Im["\TAKE_1".]{"\sigma.\ \TAKE\ u, \vec\beta : \obl{\G ||- \forall x : e} --> \obl{\D ||- f}"}
           {\deduce{"\sigma.\ \TAKE\ \vec\beta : \obl{\G, \NEW u ||- e [x := u]} --> \obl{\D ||- f}"}{\EE_0}}
      \end{gather*}

      \begin{ecom}[{$\s3$}1.]
      \item "\obl{\G, \NEW u ||- e [x := u]}" is provable
        \by i.h. for "\EE_0".
      \item \Qed
        \by \s31{} and predicate logic.
      \end{ecom}

    \item \Case "\tau" is "\sigma.\ \TAKE\ u \in T", \ie,
      \begin{gather*}
        \EE =
        \Im["\TAKE_2".]{"\sigma.\ \TAKE\ u \in T, \vec\beta : \obl{\G ||- \forall x \in S : e} --> \obl{\D ||- f}"}
           {"\obl{\G ||- S \subseteq T}"
            &
            \deduce{"\sigma.\ \TAKE\ \vec\beta : \obl{\G, \NEW u, u \in T ||- e [x := u]} --> \obl{\D ||- f}"}{\EE_0}}
      \end{gather*}

      \begin{ecom}[{$\s3$}1.]
      \item "\obl{\G, \NEW u, u \in T ||- e [x := u]}" is provable
        \by i.h on "\EE_0".
      \item "\obl{\G, \NEW u, u \in S ||- u \in T}" is provable
        \begin{ecom}[{$\s4$}1.]
        \item "\obl{\G, \NEW u ||- S \subseteq T}" is provable
          \by Definition~\ref{defn:proc-provable}, weakening
          (Fact~\ref{thm:weaken}).
        \item \Qed
          \by \s41, Definition of "\subseteq".
        \end{ecom}
      \item "\obl{\G, \NEW u, u \in S ||- e[x := u]}" is provable
        \by \s31, \s32, cut (Fact~\ref{thm:cut}).
      \item \Qed
        \by \s33{} and predicate logic.
      \end{ecom}

    \item \Case "\tau" is "\WITNESS\ w, \W", \ie,
      \begin{gather*}
        \EE =
        \Im["\WITNESS_1".]{"\sigma.\ \WITNESS\ w, \W : \obl{\G ||- \exists x : e} --> \obl{\D ||- f}"}
           {\deduce{"\sigma.\ \WITNESS\ \W : \obl{\G ||- e [x := w]} --> \obl{\D ||- f}"}{\EE_0}}
      \end{gather*}

      \begin{ecom}[{$\s3$}1.]
      \item "\obl{\G ||- e [x := w]}" is provable
        \by i.h. for "\EE_0".
      \item \Qed
        \by \s31.
      \end{ecom}

    \item \Case "\tau" is "\WITNESS\ w \in T, \W" and:
      \begin{gather*}
        \EE =
        \Im["\WITNESS_2".]
           {"\sigma.\ \WITNESS\ w \in T, \W : \obl{\G ||- \exists x \in S : e} --> \obl{\D ||- f}"}
           {"\obl{\G ||- T \subseteq S}"
            &
            "\obl{\G ||- w \in T}"
            &
            \deduce{"\sigma.\ \WITNESS\ \W : \obl{\G, w \in T ||- e [x := w]} --> \obl{\D ||- f}"}{\EE_0}}
     \end{gather*}

     \begin{ecom}[{$\s3$}1.]
     \item "\obl{\G, w \in T ||- e [x := w]}" is provable
       \by i.h. for "\EE_0".
     \item "\obl{\G ||- w \in T}" is provable
       \by Definition~\ref{defn:proc-provable}.
     \item "\obl{\G ||- e [x := w]}" is provable
       \by \s31, \s32, cut (Fact~\ref{thm:cut}).
     \item "\obl{\G ||- w \in S}" is provable
       \begin{ecom}[{$\s4$}1.]
       \item "\obl{\G, w \in T ||- w \in S}" is provable
         \by Definition~\ref{defn:proc-provable}, Definition of "\subseteq".
       \item \Qed
         \by \s41, \s32, cut (Fact~\ref{thm:cut}).
       \end{ecom}
     \item \Qed
       \by \s33, \s34, and predicate logic.
     \end{ecom}
     
    \item "\tau" is "\HAVE\ g", \ie,
      \begin{gather*}
        \EE =
        \Im[\HAVE.]{"\sigma.\ \HAVE\ g : \obl{\G ||- e => f} --> \obl{\G, g ||- f}"}
           {"\obl{\G, e ||- g}"}
      \end{gather*}

      \begin{ecom}[{$\s3$}1.]
      \item "\obl{\G, e, g ||- f}" is provable
        \by weakening (Fact~\ref{thm:weaken}).
      \item "\obl{\G, e ||- g}" is provable
        \by Definition~\ref{defn:proc-provable}.
      \item "\obl{\G, e ||- f}" is provable
        \by \s31, \s32, cut (Fact~\ref{thm:cut}).
      \item "\obl{\G ||- e => f}" is provable
        \by \s33.
      \end{ecom}

    \item "\sigma.\,\tau" is "\s n.\ \obl{\W ||- g}\ \PROOF\ \pi", \ie, 
      \begin{gather*}
        \EE =
        \Im[$\ASSERT_1$.]{"\s n.\ \obl{\W ||- g}\ \PROOF\ \pi : \obl{\G ||- e} --> \obl{\G, \obl{\W ||- g} ||- e}"}
           {\deduce{"\pi : \obl{\G, \hide{\lnot e}, \W ||- g}"}{\DD_0}}
      \end{gather*}

      \begin{ecom}[{$\s3$}1.]
      \item "\obl{\G, \hide{\lnot e}, \obl{\W ||- g} ||- e}" is provable
        \by weakening (Fact~\ref{thm:weaken}).
      \item "\obl{\G, \hide{\lnot e}, \W ||- g}" is provable
        \by i.h. for "\DD_0".
      \item "\obl{\G, \hide{\lnot e} ||- e}" is provable
        \by \s31, \s32, cut (Fact~\ref{thm:cut}).
      \item \Qed
        \by \s33, Fact~\ref{thm:classic}.
      \end{ecom}

    \item \Case "\sigma.\,\tau" is "\s n l.\ \obl{\W ||- g}\ \PROOF\ \pi", \ie,
      \begin{gather*}
        \EE =
        \Im[$\ASSERT_2$.]{"\s n l.\ \obl{\W ||- g}\ \PROOF\ \pi : \obl{\G ||- e} --> \obl{\G, \s n l \DEF \obl{\W ||- g}, \hide{\s n l} ||- e}"}
           {\deduce{"\pi : \obl{\G, \s n l \DEF \obl{\W ||- g}, \hide{\lnot e}, \W ||- g}"}{\DD_0}}
      \end{gather*}

      \begin{ecom}[{$\s3$}1.]
      \item "\obl{\G, \s n l \DEF \obl{\W ||- g}, \hide{\lnot e}, \hide{\s n l} ||- e}" is provable
        \\\mbox{}
        \by provability of "\obl{\G, \s n l \DEF \obl{\W ||- g}, \hide{\s n l} ||- e}", weakening (Fact~\ref{thm:weaken}).
      \item "\obl{\G, \s n l \DEF \obl{\W ||- g}, \hide{\lnot e}, \hide{\obl{\W ||- g}} ||- e}" is provable
        \by \s31, expansion (Fact~\ref{thm:expand}).
      \item "\obl{\G, \s n l \DEF \obl{\W ||- g}, \hide{\lnot e}, \W ||- g}" is provable
        \by i.h. for "\DD_0".%
      \item "\obl{\G, \s n l \DEF \obl{\W ||- g}, \hide{\lnot e} ||- e}" is provable
        \by \s32, \s33, cut (Fact~\ref{thm:cut}).
      \item "\obl{\G, \hide{\lnot e} ||- e}" is provable
        \by \s34, strengthening (Fact~\ref{thm:delete}).
      \item \Qed
        \by \s35, Fact~\ref{thm:classic}.
      \end{ecom}
      
    \item "\tau" is "\CASE\ g\ \PROOF\ \pi", \ie,
      \begin{gather*}
        \EE =
        \Im[\CASE.]{"\sigma.\ \CASE\ g\ \PROOF\ \pi : \obl{\G ||- e} --> \obl{\D ||- f}"}
           {\deduce{"\sigma.\ \obl{g ||- e}\ \PROOF\ \pi : \obl{\G ||- e} --> \obl{\D ||- f}"}{\EE_0}}
      \end{gather*}

      \textit{By} i.h. for "\EE_0".

    \item "\tau" is "\s n.\ \SUFFICES\ \obl{\W ||- g}\ \PROOF\ \pi", \ie,
      \begin{gather*}
        \EE =
        \Im[$\SUFFICES_1$.]{"\s n.\ \SUFFICES\ \obl{\D ||- f}\ \PROOF\ \pi : \obl{\G ||- e} --> \obl{\G, \hide{\lnot e}, \W ||- g}"}
           {\deduce{"\pi : \obl{\G, \obl{\W ||- g} ||- e}"}{\DD_0}}
      \end{gather*}

    \begin{ecom}[{$\s3$}1.]
      \item "\obl{\G, \hide{\lnot e}, \obl{\W ||- g} ||- e}" is provable
        \by i.h. for "\DD_0", weakening (Fact~\ref{thm:weaken}).
      \item "\obl{\G, \hide{\lnot e} ||- e}" is provable
        \by provability of "\obl{\G, \hide{\lnot e}, \W ||- g}", \s31, cut (Fact~\ref{thm:cut}).
      \item \Qed
        \by \s32, Fact~\ref{thm:classic}.
      \end{ecom}

    \item "\sigma.\,\tau" is "\s n l.\ \SUFFICES\ \obl{\W ||- g}\ \PROOF\ \pi", \ie,
      \begin{gather*}
        \EE =
        \Im[$\SUFFICES_2$.]{"\s n l.\ \SUFFICES\ \obl{\W ||- g}\ \PROOF\ \pi : \obl{\G ||- e} --> \obl{\G, \s n l \DEF \obl{\W ||- g}, \hide{\lnot e}, \W ||- g}"}
           {\deduce{"\pi : \obl{\G, \s n l \DEF \obl{\W ||- g}, \hide{\s n l} ||- e}"}{\DD_0}}
      \end{gather*}

      \begin{ecom}[{$\s3$}1.]
      \item "\obl{\G, \s n l \DEF \obl{\W ||- g}, \hide{\lnot e}, \hide{\s n l} ||- e}" is provable
        \by i.h. for "\DD_0", weakening (Fact~\ref{thm:weaken}).
      \item "\obl{\G, \s n l \DEF \obl{\W ||- g}, \hide{\lnot e}, \hide{\obl{\W ||- g}} ||- e}" is provable
        \by \s31, expansion (Fact~\ref{thm:expand}).
      \item "\obl{\G, \s n l \DEF \obl{\W ||- g}, \hide{\lnot e} ||- e}" is provable
        \\\mbox{}
        \by \s32, provability of "\obl{\G, \s n l \DEF \obl{\W ||- g}, \hide{\lnot e}, \W ||- g}", cut (Fact~\ref{thm:cut}).
      \item "\obl{\G, \hide{\lnot e} ||- e}" is provable
        \by \s33, strengthening (Fact~\ref{thm:delete}).
      \item \Qed
        \by \s34, Fact~\ref{thm:classic}.
      \end{ecom}

    \item \Case "\tau" is "\PICK\ \vec\beta : p\ \PROOF\ \pi", \ie,
      \begin{gather*}
        \EE =
        \Im[\PICK.]{"\sigma.\ \PICK\ \vec\beta : p\ \PROOF\ \pi :
                        \obl{\G ||- e} --> \obl{\G, \refl{\vec \beta}, p ||- e}"}
           {\deduce{"\pi : \obl{\G ||- \exists \vec \beta : p}"}{\DD_0}}
      \end{gather*}

      \begin{ecom}[{$\s3$}1.]
      \item "\obl{\G, \exists \vec \beta : p ||- e}" is provable
        \by provability of "\obl{\G, \refl{\vec \beta}, p ||- e}", predicate logic.
      \item "\obl{\G ||- \exists \vec \beta : p}" is provable
        \by i.h. for "\DD_0".
      \item \Qed
        \by \s31, \s32, cut (Fact~\ref{thm:cut}).
      \end{ecom}

    \item \Qed
      \by \s21, \dots, \s220
    \end{ecom}
  \item \Qed
    \by \s11, \s12.
  \end{ecom}
\end{proof}

\subsection{Constrained Search}
\label{apx:constraints}

The correctness theorem (\ref{thm:correctness}) establishes an
implication from the leaf obligations generated by a complete
meaningful claim to the obligation of the claim. It is always true,
regardless of the provability of any individual leaf obligation. While
changing the visibility of assumptions in an obligation does not
change its provability, a back-end prover may fail to prove it if
important assumptions are hidden.  As already mentioned in
Section~\ref{sec:obligations}, the \PM removes these hidden
assumptions before sending a leaf obligation to a back-end prover.
Therefore, in order to establish the Structural Soundness Theorem
(\ref{thm:meaning}), we must prove a property about the result of this
removal.

\begin{defn}[Filtration] \label{defn:filter} 
  The \emph{filtered} form of any obligation "\phi", written
  "\filter\phi", is obtained by deleting all assumptions of the form
  "\hide{\phi_0}" and replacing all assumptions of the form "\hide{o
    \DEF \delta}" with "\NEW o" anywhere inside "\phi".
\end{defn}

\noindent%
For example, "\filter{\NEW x, \hide{y \DEF x} ||- x = y} = \obl{\NEW
  x, \NEW y ||- x = y}". We thus see that filtration can render a true
obligation false; however, if the filtered form of an obligation is
true, then so is the obligation.

\begin{lem}[Verification Lemma] \label{thm:verification}
  If "\filter{\phi}" is provable, then "\phi" is provable.
\end{lem}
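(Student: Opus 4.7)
The plan is to prove the lemma by structural induction on $\phi$, restoring each filtered assumption to its original form one at a time using the Definition and Weakening facts from Section~\ref{apx:context}. Observe that $\filter\phi$ differs from $\phi$ in only two ways: hidden definitions $\hide{o \DEF \delta}$ have been turned into bare declarations $\NEW o$, and hidden facts $\hide{\phi_0}$ have been deleted outright. Since filtration is applied anywhere inside $\phi$, the induction must be structural enough to also descend into obligations nested inside assumptions (the $\phi$ production in the assumption grammar of Definition~\ref{defn:syntax}) or inside definables (the $\delta = \phi$ production).

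Concretely, I would write $\phi = \obl{h_1, \dotsc, h_n ||- e}$ and induct on $n$, with a mutual structural induction on nested obligations. At the inductive step I inspect the last assumption $h_n$. If $h_n$ is $\NEW x$, a usable $o \DEF \delta$, or a usable $\phi_0$, then filtration acts only on its interior, so the inductive hypothesis (together with congruence of provability in those positions) dispatches the case directly. The two non-trivial cases are the hidden ones. If $h_n = \hide{\phi_0}$, filtration drops it entirely: provability of the corresponding shorter filtered obligation (via the IH) combined with Weakening (Fact~\ref{thm:weaken}) reinstates $\hide{\phi_0}$. If $h_n = \hide{o \DEF \delta}$, the filtered context has $\NEW o$ in its place; Fact~\ref{thm:definition} converts $\NEW o$ back into $o \DEF \delta$, and the Isabelle embedding (Definition~\ref{defn:isabelle-embedding}) treats $o \DEF \delta$ and $\hide{o \DEF \delta}$ identically for provability, which recovers $\hide{o \DEF \delta}$.

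The main obstacle I anticipate is the bookkeeping around the nested structure and the well-formedness side condition of Fact~\ref{thm:definition}. Each application requires the resulting obligation to be well-formed, which I would thread through the induction using the fact that $\phi$ itself is well-formed (hypothesis of Definition~\ref{defn:provable}) and that filtration only removes information, never introducing new identifiers or references. Once the induction principle is correctly set up so that the IH applies uniformly to every sub-obligation appearing inside an assumption or a definable, the case analysis above exhausts all assumption shapes and the lemma follows.
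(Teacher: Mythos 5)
Your proposal is correct and follows essentially the same route as the paper, which proves the lemma by induction on the structure of the obligation with each case a direct consequence of Facts~\ref{thm:definition} (to turn a restored declaration back into a hidden definition) and~\ref{thm:weaken} (to reinstate a deleted hidden fact). The paper gives only a proof sketch, so your more explicit treatment of the nested-obligation cases and the well-formedness side conditions is a faithful elaboration rather than a departure.
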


\begin{proof}[Proof Sketch]
  By induction on the structure of the obligation "\phi", with each
  case a straightforward consequence of facts~\ref{thm:definition}
  and~\ref{thm:weaken}.
\end{proof}

\begin{defn}[Verifiability] \label{defn:verifiability}
  The obligation "\phi" is said to be  \emph{verifiable} if "\filter\phi" is
  provable.
\end{defn}

\noindent%
We now prove the Structural Soundness Theorem (\ref{thm:meaning}).

\setcounter{thm}{0}

\begin{thm}
  If "\pi:\phi" is a complete meaningful claim and every leaf
  obligations it generates is verifiable, then "\phi" is true.
\end{thm}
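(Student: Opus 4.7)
The plan is to chain together the three principal results that have been established earlier in the appendix: the Verification Lemma (\ref{thm:verification}), the Correctness Theorem (\ref{thm:correctness}), and the Trust axiom (\ref{axm:trust}). The hypothesis of the theorem gives us verifiability of every leaf obligation, and the conclusion asks for truth of the top obligation, so the proof is essentially a sequence of lifts: verifiable $\Rightarrow$ provable $\Rightarrow$ provable (claim) $\Rightarrow$ provable (goal) $\Rightarrow$ true.

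Concretely, I would proceed as follows. First, unfold Definition~\ref{defn:verifiability}: since each leaf obligation $\phi_i$ generated by $\pi : \phi$ is verifiable, $\filter{\phi_i}$ is provable for each $i$. Second, apply the Verification Lemma (\ref{thm:verification}) to each $\phi_i$ to conclude that $\phi_i$ itself is provable. Third, since $\pi : \phi$ was assumed to be complete and meaningful and we have just shown that every leaf obligation it generates is provable, Definition~\ref{defn:proc-provable}(1) tells us that the claim $\pi : \phi$ is provable. Fourth, invoke part~(1) of the Correctness Theorem (\ref{thm:correctness}) to conclude that the obligation $\phi$ is provable. Finally, apply the Trust axiom (\ref{axm:trust}) to get that $\phi$ is true.

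There is no real obstacle here: the theorem is designed to package the preceding three results into the single headline guarantee that justifies the \PM's overall workflow. The only subtlety worth flagging in the write-up is the distinction between the two notions of ``provable'' at play, namely provability of an obligation (Definition~\ref{defn:provable}, via the Isabelle/\tlaplus kernel) and provability of a claim (Definition~\ref{defn:proc-provable}, meaning completeness, meaningfulness, and provability of all generated leaf obligations); the step from the former for each $\phi_i$ to the latter for $\pi : \phi$ is immediate but should be named explicitly so the reader can track which definition is being invoked at each link in the chain.
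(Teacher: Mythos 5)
Your proof is correct and follows essentially the same route as the paper's: unfold verifiability, lift each filtered leaf obligation to provability of the leaf obligation via the Verification Lemma~\ref{thm:verification}, conclude provability of "\phi" by the Correctness Theorem~\ref{thm:correctness}, and finish with the Trust Axiom~\ref{axm:trust}. Your explicit intermediate appeal to Definition~\ref{defn:proc-provable} to pass from provability of all leaf obligations to provability of the claim is a minor presentational refinement that the paper leaves implicit.
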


\begin{proof}\mbox {} \small
  \begin{ecom}[{$\s1$}1.]
  \item For every leaf obligation "\phi_0" generated by "\pi : \phi",
    it must be that "\phi_0" is provable.
    \begin{ecom}[{$\s2$}1.]
    \item \textit{Take} "\phi_0" as a leaf obligation generated by
      "\pi : \phi".
    \item "\filter{\phi_0}" is provable \by assumption and Definition~\ref{defn:verifiability}.
    \item \Qed \by \s22, Verification Lemma~\ref{thm:verification}.
    \end{ecom}
  \item "\phi" is provable \by \s11, Correctness Theorem~\ref{thm:correctness}.
  \item \Qed \by \s12, Trust Axiom~\ref{axm:trust}.
  \end{ecom}
\end{proof}

\clearpage

\section{A \tlatwo Proof of Cantor's Theorem}
\label{apx:cantor}

The following is the complete \tlatwo proof of Cantor's theorem
referenced in Section~\ref{sec:proof-language.lang}.
\begin{quote} \small
  \begin{tabbing}
    \THEOREM\ "\forall S : \forall f \in [S -> \SUBSET\ S] : \exists A \in \SUBSET\ S : \forall x \in S : f[x] \neq A" \\
    \s1m.\ \= \kill
    \s11.\ \> \ASSUME \= "\NEW\ S", \\
           \>         \> "\NEW\ f \in [S -> \SUBSET\ S]" \\
           \> \PROVE "\exists A \in \SUBSET\ S : \forall x \in S : f[x] \neq A" \\
           \hspace{1em}\= \s2m.\ \= \kill
           \> \s21.\ \> \DEFINE "T \DEF \{z \in S : z \notin f[z]\}" \\
           \> \s22.  \> "\forall x \in S : f[x] \neq T" \\
           \>        \hspace{1em}\= \s3m.\ \= \kill
           \>        \> \s31.\ \> \ASSUME "\NEW\ x \in S" \PROVE "f[x] \neq T" \\
           \>        \>        \hspace{1em}\=\s4m.\ \= \kill
           \>        \>        \> \s41.\ \> \CASE "x \in T" \OBVIOUS \\
           \>        \>        \> \s42.\ \> \CASE "x \notin T" \OBVIOUS \\
           \>        \>        \> \s43.\ \> \QED \BY \s41, \s42 \\
           \>        \hspace{1em}\= \s3m.\ \= \kill
           \>        \> \s32.  \> \QED\ \BY\ \s31 \\
           \hspace{1em}\= \s2m.\ \= \kill
           \> \s23.  \> \QED\ \BY\ \s22 \\
    \s1m.\ \= \kill
    \s12.  \> \QED\ \BY\ \s11
  \end{tabbing}
\end{quote}
As an example, the leaf obligation generated (see
Appendix~\ref{apx:correctness}) for the proof of \s41 is:
\def\ss#1#2{\hbox{\s{#1}{#2}}}
\begin{quote} \small
  \begin{tabbing}
    "\smash{\Bigl(}"\ \ 
    \= "\ss11 \DEF \obl{\NEW S, \NEW f, f \in [S -> \SUBSET\ S] ||- \exists A \in \SUBSET\ S : \forall x \in S : f[x] \neq A}", \\
    \> "\NEW S", \\
    \> "\NEW f", "f \in [S -> \SUBSET\ S]", \\
    \> "T \DEF \{z \in S : z \notin f[z]\}", \\
    \> "\hide{\lnot \left(\exists A \in \SUBSET\ S : \forall x \in S : f[x] \neq A\right)}", \\
    \> "\ss22 \DEF \forall x \in S : f[x] \neq T", \\
    \> "\hide{\lnot \left( \forall x \in S : f[x] \neq T \right)}", \\
    \> "\ss31 \DEF \obl{\NEW x, x \in S ||- f[x] \neq T}", \\
    \> "\NEW x", "x \in S", \\
    \> "\hide{\lnot \left( f[x] \neq T \right)}", \\
    \> "\ss41 \DEF \obl{x \in T ||- f[x] \neq T}", \\
    \> "x \in T" \\
    "||-" \> "f[x] \neq T" "\smash{\Bigr)}".
  \end{tabbing}
\end{quote}
Filtering its obligation (see Definition~\ref{defn:filter}) and expanding all definitions gives:
\begin{quote} \small
  \begin{tabbing}
    "\smash{\Bigl(}"
    \= "\NEW S", \\
    \> "\NEW f", "f \in [S -> \SUBSET\ S]", \\
    \> "\NEW x", "x \in S", \\
    \> "x \in \{z \in S : z \notin f[z]\} ||- f[x] \neq \{z \in S : z \notin f[z]\}" "\smash{\Bigr)}".
  \end{tabbing}
\end{quote}
In Isabelle/\tlaplus, this is the following lemma:
\begin{quote} \small
  \begin{tabbing}
    \texttt{lemma} \= "!! S." \\
                   \> "!! f.\ " \= " f \in [S -> \SUBSET\ S] ==>" \\
                   \>                 \> "\Bigl(!! x.\ \bigl\llbracket" \= "x \in S ;" \\
                   \> \> \> "x \in \{z \in S : z \notin f[z]\} \ \bigr\rrbracket
                         ==> f[x] \neq \{z \in S : z \notin f[z]\} \Bigr)"
  \end{tabbing}
\end{quote}

\end{document}